\newtheorem{theorem}{Theorem}
{
	{
		{
			{
				\newcommand{\hs}{\quad}

				\newtheorem{thm}{Theorem}[section]

				\newtheorem{defin}[thm]{Definition}
				\numberwithin{equation}{section}
				
				\parindent 0pt

\title[Hybridized PROJECTED DIFFERENTIAL TRANSFORM METHOD FOR collisional-breakage equation ]{\textbf{Hybridized PROJECTED DIFFERENTIAL TRANSFORM METHOD FOR collisional-breakage equation }}
\author[Shweta, Saddam Hussain and Rajesh Kumar]{Shweta $^\dag$, Saddam Hussain$^{\dag,*}$, and  Rajesh Kumar$^\dag$}
\thanks{$^\dag$Department of Mathematics,
Birla Institute of Technology and Science, Pilani, Rajasthan-333031, India.\\
$^\ast$Corresponding author: Email: p20210073@pilani.bits-pilani.ac.in\\
Saddam Hussain: Email: p20200438@pilani.bits-pilani.ac.in\\
Rajesh Kumar: Email: rajesh.kumar@pilani.bits-pilani.ac.in.}
\begin{document}
\maketitle
\begin{quote}
{\textit{ Abstract: The non-linear collision induced fragmentation plays a crucial role in modeling several engineering and physical problems. In contrast to linear breakage, it has not been thoroughly investigated in the existing literature. This study introduces an innovative method that leverages the Elzaki integral transform as a preparatory step to enhance the accuracy and convergence of domain decomposition, used alongside the projected differential transform method to obtain closed-form or series approximations of solutions for the collisional breakage equation (CBE). A significant advantages of this technique is its capability to directly address both linear and nonlinear differential equations without the need for discretization or linearization. The mathematical framework is reinforced by a thorough convergence analysis, applying fixed point theory within an adequately defined Banach space. Additionally, error estimates for the approximated solutions are derived, offering more profound insights into the accuracy and dependability of the proposed method. The validity of this approach is demonstrated by comparing the obtained results with exact or finite volume approximated solutions considering several physical examples. Interestingly, the proposed algorithm yields accurate approximations for the number density functions as well as moments with fewer terms and maintains higher precision over extended time periods.}}
\end{quote}
\textit{Keywords: Collision-induced breakage, Semi-analytical scheme, EPDTM, Convergence analysis, Error estimates.}
\section{Introduction}
The particulate process is a dynamic phenomenon characterized by alterations in the physical properties of particles. These processes are extensively studied across various domains, including engineering, astronomy, chemistry, and physics. Within particulate processes, particles may undergo amalgamation to form larger entities or disintegration into smaller fragments, leading to changes in size, shape, volume, and other attributes. Among the numerous types of particulate processes, coagulation (aggregation), fragmentation (breakage), nucleation, and growth are the most prominent. The fragmentation process refers to the disintegration of larger clusters into diminutive fragments. The fragmentation event can be a desirable mechanism in certain processes such as milling, grinding, and protein filament division \cite{tournus2021insights, wang2021multiscale}. Fragmentation bifurcates into two distinct categories: linear and nonlinear fragmentation. Internal stresses or external forces primarily govern linear breakage. In contrast, nonlinear collisional breakage encompasses the complex dynamics of particle interactions, particularly collisional events that can lead to the formation of fragments exceeding the original particle volume \cite{martinez2010considerations, planchette2017colliding}. Non-linear breakage can transfer mass between impacting particles, resulting in bigger offspring particles than the parent particles. Thus, it encompasses a wider framework than linear breakage and significantly influence on several engineering activities and natural science, including protein filament division, pharmaceutical research, chemical engineering, astronomy, fluidized beds, raindrop formation, milling \cite{ chen2020collision, ma2020effects, fries2013collision, lombart2022fragmentation} and chaotic motion in turbulent flow \cite{chen2020collision}. The likelihood of these particles merging depends on several complex factors, including the energy lost during collisions, interactions influenced by fluid dynamics, and the properties of the particle surfaces.
\par
Despite its crucial role and significant contributions to numerous real-world applications, the mathematical literature on collisional breakage equations (CBE) remains relatively sparse within the scientific community's discourse. This research work examines nonlinear breakage, specifically pure binary collisional fragmentation, which occurs when particles collide inelastically and instantaneously. These mechanisms are important for understanding raindrop dispersion and communication networks, as well as engineering operations like fluidized beds and milling. In recent studies on bubble columns, several researchers have utilized a straightforward linear breakage model to describe the dynamics. However, they frequently overlook the high probability of bubble formation due to collisions, which leads to the development of complex mathematical model \cite{guo2017cfd, yang2017numerical}. As a result, investigating nonlinear models that account for collision-induced fragmentation is crucial for a more accurate depiction of bubble column dynamics. 
\subsection{Formulation of non-linear collisional breakage model}
In 1988, Cheng and Redner \cite{cheng1988scaling} provided a mathematical formulation of one-dimensional binary collisional breakage through an integro-differential equation as follows:
\begin{align}\label{Col}
	\frac{\partial w (n,\tau)}{\partial\tau}=	\int_{0}^{\infty}\int_{n}^{\infty}\mu(\epsilon,\rho) \alpha(n,\epsilon,\rho)w(\epsilon,\tau)w(\rho,\tau)d\epsilon d\rho-\int_{0}^{\infty}\mu(n,\epsilon)w(n,\tau)w(\epsilon,\tau)d\epsilon
\end{align}
supported with the initial distribution, $w(n, 0)=w_{0}(n).$\\
Here, $w(n, \tau)$ represents the particle number density function characterized by the internal properties $n$ at time $\tau\geq0$. The equation \eqref{Col} accounts for the temporal rate of change in the number density function. 
Term $\mu(n,\epsilon)$  quantifies the rate of collision among two separate particles of volumes $n$ and $\epsilon$, leading to the disintegration of at least one of the striking particles. The collision
between a particle pair of volumes $n$ and $\epsilon$ is equivalent to collision between the pair of volumes $\epsilon$ and $n$. Therefore, the kernel is assumed to be symmetric in its arguments, i.e.,
$\mu(n,\epsilon)=\mu(\epsilon,n)$ for all $n, \epsilon \in \mathbb{R}=(0,\infty)$ and $\tau>0.$\\
The breakage distribution function $\alpha(n,\epsilon,\rho)$ denotes the rate at which particles of volume $n$ are generated by the breakage of particles of volume $\epsilon$ resulting from collisions between $\epsilon$ and $\rho$. The breakage kernel $\alpha(n,\epsilon,\rho)$ satisfies the following two properties:
\begin{itemize}
\item[A.]   $\int_{0}^{\epsilon} \alpha(n,\epsilon,\rho)dn = N(\epsilon,\rho)<\infty$ for all $\epsilon,\rho >0 ;$ Number of daughter particles after breakage.
\item[B.]   $\int_{0}^{\epsilon} n \alpha(n,\epsilon,\rho)dn = \epsilon$ for all $(n,\epsilon) \in (0,\infty) \times (0,\infty); $  
This property is known as the mass conservation property. It ensures that the total volume of daughter particles produced as a result of the breakage process equals the volume of the mother particle.
\end{itemize}
The first term on the right-hand side of equation \eqref{Col}, known as the birth term, illustrates the rate at which particles with volume $n$ are generated when larger particle $\epsilon$ undergoes fragmentation due to collision with $\rho$. The second term, designated as the death term, elucidates the rate at which particles with volume $\epsilon$ are depleted due to interactions with other particles.\\
Furthermore, for deeper insights and further analysis of the system, we investigate integral properties such as moments, which are crucial in understanding various real-life phenomena due to their physical representation. The $k$-th order moment of a distribution function $w(n, \tau)$ is mathematically defined as:
\begin{align}\label{exactmoment}
	N_k(\tau) = \int_{0}^{\infty} n^k w(n, \tau )dn.
\end{align}
The zeroth order moment corresponds to the total number of particles within the system, representing the overall population of particles. The first moment describes the total mass of the system, while  the second moment ($k=2$) corresponds to total energy dissipated by the system during the process.
\subsection{State-of-the-art and its advancement}
Several studies examined numerical schemes and semi-analytical techniques (SAT) to investigate linear fragmentation and nonlinear coagulation models, see \cite{ hussain2023semi, arora2023comparison, kaushik2023novel, kaushik2023laplace} and further citations. The study in \cite{ziff1991new} offers an analytical solution to the linear fragmentation equation for certain kernel functions. The findings in \cite{peterson1986similarity, breschi2017note} explores similarity solutions, whereas \cite{hosseininia2006numerical} delves into the numerical solution of the model via adopting discrete element method. Extensive work has also been conducted on the well-posedness of coagulation in the context of linear breakage equations. The non-linear fragmentation equation has sparked growing attention in the scientific community. As model is an integro-partial differential equation with various parameters, including the breakage distribution function, collisional kernel, and number density function, solving this equation analytically for complex physical parameters poses a significant challenge. Further, the literature on collisional breakage predominantly focuses on the theoretical aspects of solution existence and uniqueness \cite{laurenccot2001discrete, giri2021existence, barik2020global}. In \cite{laurenccot2001discrete, walker2002coalescence}, the authors investigated the existence, uniqueness, and density conservation of solutions to the CBE. Additionally, \cite{giri2021existence} demonstrates the existence of weak solutions for the CBE with a broad range of unbounded collision kernels characterized by $\mu(\epsilon,\rho)= \epsilon^{i}\rho^{j}+ \rho^{i} \epsilon^{j} $ where $i \leq j \leq 1$.
In \cite{barik2020global}, the authors detailed the mechanisms underlying global classical solutions for aggregation and CBEs, specifically with a collision kernel that permits infinite growth at high volumes. Recently, numerical studies are done on CBE via finite volume methods \cite{das2024improved, das2020approximate}.

For an in-depth discussion of the difficulties associated with the numerical approaches for solving population balance models, readers are referred to \cite{singh2022challenges}. These difficulties likely underlie the development of various methods to find solutions.
Therefore, developing more reliable and versatile tools for solving differential equations remains an essential and active area of research. 
Unlike traditional methods, SATs do not rely on standard features such as linearization, mesh discretization, and predefined sets of basis functions. Over the past decade, researchers have extensively employed SATs to derive series approximation solutions for nonlinear partial and integro-partial differential equations \cite{ hussain2023semi, hussain2024analytical, noor2008homotopy }. 
\par This work presents a novel hybridized SAT for solving the model \eqref{Col}. Tarig Elzaki's development of the Elzaki transform effectively addresses linear and non-linear differential equations (DE) \cite{elzaki2011new, elzaki2011application}. The approcah is further enhanced by integrating it with the projected differential transform method \cite{jang2010solving}, that significantly improve accuracy and convergence in solving these problems \cite{akinfe2021solitary, akinfe2022implementation, ayaz2004applications}. The analytical flexibility of this approach facilitates the resolution of intricate issues and simplifies complex equations. In nutshell, the selection of the Elzaki projected differential transform method (EPDTM) is driven by its versatility, numerical efficiency, and potential to advance mathematical exploration in complex, real-world systems. In a related study, Hussain and Kumar \cite{hussain2024elzaki} used the EPDTM for solving multi-dimensional aggregation and combined aggregation breakage equations. This motivates us to extend the scheme to solve CBEs for different kernel parameters and initial conditions. By employing this algorithm, we aim to offer a rigorous and computationally efficient framework for simulating collisional breakage processes. Our research is designed to advance the development of reliable solution techniques for population balance modeling, thereby aiding in designing and optimizing industrial processes involving particle breakage. Additionally, we include convergence analysis and error estimations to validate the method's reliability. Further six numerical examples are taken into account to compare our findings for moments as well as number density with available precise solution or FVM results. 
\par The rest of the paper is structured as follows: Section 2 presents a brief overview of the Elzaki transformation and the projected differential transform method. Section 3 describes the methodology, including the approach for solving the CBE and their respective convergence results. Section 4 assesses the accuracy and efficiency by comparing the approximated EPDTM solutions with exact solutions, where available or with the finite volume solutions \cite{das2020approximate}. Finally, observations and discussions are summarized in Section 5.
\section{Preambles}
\subsection{ Elzaki Integral Transform (EIT)}	
It represents an improved version of the extensively employed Laplace integral transform, whose efficacy has enabled the resolution of equations that proved challenging for both the  Sumudu and Laplace integral transforms \cite{elzaki2012elzaki}. 
The EIT is represented as a semi-infinite integral form, see \cite{elzaki2012new}, as 
\begin{equation} \label{e2.1}
E[f(\tau)] = T(q) = q\int_{0}^{\infty}f(\tau)e^{-\frac{\tau}{q}}d\tau,\quad\tau>0,
\end{equation}
or, similarly
\begin{align*}
E[f(\tau)] = T(q) = q^2\int_{0}^{\infty}f(q\tau )e^{-\tau}d\tau,\quad\tau>0.
\end{align*}
The function $f(\tau)$ should belong to the set:
$$ A' = \lbrace f(\tau): \exists G, q_{1}, q_{2} > 0, |f(\tau)|< Ge^{\frac{|\tau|}{q_j}}, \text{if} \quad \tau \in(-1)^j \times[0,\infty), j=1,2\rbrace.$$
Here, $G$ is a finite real number, whereas $ q_{1}$ and $q_{2}$  might be finite or infinite.
The inverse of EIT is defined as	
\begin{equation} \label{2.2}
E^{-1}[T(q)] = \frac{1}{2\pi i}\int_{0}^{\infty} e^{\tau q} T\bigg(\frac{1}{q}\bigg) q dq.
\end{equation}
Following \cite{elzaki2012elzaki}, the EIT of spatial and time derivatives for a function can be easily calculated using the appropriate integration approach and are given as
\begin{align*}
E\bigg[\frac{\partial f}{\partial \tau}\bigg] = &\frac{1}{q} T(n,q) -q f(n,0),\quad
E\bigg[\frac{\partial f}{\partial n}\bigg] = \frac{\partial T(n,0)}{\partial n},
\end{align*}
\text{and}	$$E\bigg[\frac{\partial^{n} f}{\partial \tau^{q}}\bigg]=\frac{T(q)}{q^n}-\sum_{k=0}^{n-1}v^{2-n+k}f^{k}(0). $$
By using \eqref{e2.1}, one can arrive at the following major properties of EIT:
\begin{align*}
E(1) =& q^2,\quad E(\tau^{n})= {n !} q^{n+2},\quad E[e^{a\tau}] = q\int_{0}^{\infty}f(\tau)e^{\frac{\tau}{q}}d\tau=\frac{q^{2}}{1-aq},\quad aq < 1,
\end{align*}
$$ E[f(\tau)+l(\tau)]=E[f(\tau)]+E[l(\tau)].$$
\subsection{Projected Differential Transform Method (PDTM)}
Bongsoo Jang \cite{jang2010solving} introduced the PDTM for analyzing nonlinear DEs. It represents a refined and enhanced rendition of Zhou's differential transform method \cite{zhou1986differential}, and relies on the commonly employed Taylor series. On the other hand, it varies from the Taylor series in determining coefficients \cite{zhou1986differential, duffy2004transform}. The fundamental definitions and properties of the PDTM are outlined below:
\begin{defin}
Let $g(n,\tau)$ be a function defined in $\mathbb{R}^2$. The projected differential transform (PDT) $G(n,k)$ of $g(n,\tau)$ concerning variable $\tau$ at $\tau_0$ is delineated as follows: 
\begin{align}\label{PDTM}	G(n,k)=\frac{1}{k!}\left[\frac{\partial^k}{\partial t^k}g(n,\tau)\right]_{\tau=\tau_0},
\end{align}
\end{defin}
and the inverse PDT of \eqref{PDTM} with respect to $\tau$ at $\tau_0$ expressed as follows:
\begin{align}\label{IPDTM}
	g(n,\tau)=\sum_{k=0}^{\infty}G(n,k)(\tau-\tau_0)^k.
\end{align}
As observed, equation \eqref{IPDTM} depicts the Taylor series of $g(n,\tau)$ at $\tau=\tau_0$. By correlating equations \eqref{PDTM} and \eqref{IPDTM}, the function $g(n,\tau)$ can be represented as follows: 
\begin{align}
	g(n,\tau)=\sum_{k=0}^{\infty}\frac{1}{k!}\left[\frac{\partial^k}{\partial t^k}g(n,\tau)\right]_{\tau=\tau_0}(\tau-\tau_0)^k.
\end{align} 
Based on the above definition and basic calculations, we have the following theorem.
\begin{theorem}
Consider $G(Y,k)$, $C(Y,k)$ and $J(Y,k)$ as the PDT of the functions $g(Y,\tau)$, $c(Y,\tau)$ and $j(Y,\tau)$, respectively, where $Y=(y_1,y_2,\ldots,y_n)$. Thus, the following  properties hold \cite{akinfe2021solitary}.
\end{theorem}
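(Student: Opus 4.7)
The plan is to verify each claimed PDT identity by unwinding the definition
$G(Y,k) = \frac{1}{k!}\left[\frac{\partial^k}{\partial \tau^k} g(Y,\tau)\right]_{\tau=\tau_0}$
and applying elementary calculus, so that every item reduces to a one-line computation of a mixed partial derivative evaluated at $\tau_0$. Before starting I would fix the standing smoothness hypothesis that $g,c,j$ are sufficiently differentiable in $\tau$ near $\tau_0$ on a neighbourhood of $Y$, which is what makes the $k$-th $\tau$-derivatives exist and allows derivatives in $y_i$ to commute with derivatives in $\tau$.

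For the linear properties, namely the PDT of $\alpha g + \beta c$, the PDT of a constant, and the PDT of a function of $Y$ alone, a single application of the linearity of $\partial_\tau^k$ gives the result: $\partial_\tau^k(\alpha g + \beta c) = \alpha\,\partial_\tau^k g + \beta\,\partial_\tau^k c$, divide by $k!$, evaluate at $\tau_0$. The PDT of a monomial $\tau^m$ (with $\tau_0 = 0$, as is usual) follows because only the $k=m$ derivative survives and contributes $m!$, producing $G(k) = \delta_{k,m}$; the exponential $e^{a\tau}$ and trigonometric cases reduce to tabulating their Taylor coefficients at $\tau_0$.

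The substantive identities are the derivative rules and the convolution rule. For the transform of $\partial g/\partial \tau$ I would observe that $\partial_\tau^k(\partial_\tau g) = \partial_\tau^{k+1} g$, so dividing by $k!$ gives $(k+1)\,G(Y,k+1)$; iterating produces the general formula for the PDT of $\partial_\tau^n g$. For a spatial derivative $\partial g/\partial y_i$, Schwarz's theorem allows pulling $\partial_{y_i}$ through $\partial_\tau^k$, giving $\partial_{y_i} G(Y,k)$. The product rule for $j = g\,c$ follows from the general Leibniz formula
$\partial_\tau^k(gc) = \sum_{l=0}^{k} \binom{k}{l} \partial_\tau^l g \,\partial_\tau^{k-l} c$;
dividing by $k!$ turns the binomial coefficient into $1/(l!(k-l)!)$, producing the discrete convolution $J(Y,k) = \sum_{l=0}^{k} G(Y,l)\,C(Y,k-l)$. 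Any additional multilinear items (triple products, quotients, or compositions) follow by induction on the number of factors, applied repeatedly to the convolution formula.

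The only real obstacle is bookkeeping rather than mathematical depth: one has to match factorials and binomial coefficients precisely and be explicit about the regularity needed so that repeated derivatives commute and the Taylor series in \eqref{IPDTM} converges on a neighbourhood of $\tau_0$. For integral-type properties that may appear (e.g.\ the PDT of $\int_a^b g(Y,\tau)\,dY$), an additional step is required to justify exchanging the integral with $\partial_\tau^k$, which is handled by dominated convergence under the same smoothness assumption.
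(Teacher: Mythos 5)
Your verification is correct, and it is the standard (indeed essentially the only) argument: the paper itself offers no proof of this theorem, simply citing the reference, so there is nothing to diverge from. Each of the four listed properties does reduce exactly as you describe — linearity of $\partial_\tau^k$, the general Leibniz rule with the $k!$ normalization turning binomial coefficients into the discrete convolution (and induction for $m$ factors), the identity $\partial_\tau^k\partial_\tau^m g=\partial_\tau^{k+m}g$ giving the factor $\frac{(k+m)!}{k!}$, and Schwarz's theorem for the spatial derivative — under the smoothness hypotheses you rightly make explicit.
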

\begin{enumerate}
	\item{\textbf{Linear property for PDTM:}}\\ \\
	If $j(Y,\tau)$ = $\gamma g(Y,\tau)+\varphi c(Y,\tau)$,
	then $J(Y,k)$ = $\gamma G(Y,k)+\varphi  C(Y,k),$\\
	where $\gamma$ and $\varphi$ are arbitrary real constants.\\
	\item{\textbf{PDTM for multiple products:}}\\ \\
	Suppose we have three or more functions to be transformed such that:\\
	$$j(Y,\tau)=g_1(Y,\tau)\cdot g_2(Y,\tau)\cdot g_3(Y,\tau)\cdot \hdots \cdot  g_m(Y,\tau),  \text{ then} $$ 
	$$J(Y,k)=\sum_{k_{m-1}=0}^{k}\sum_{k_{m-2}=0}^{k_{m-1}}  \dots \sum_{k_1=0}^{k_2}G_1(Y,k_1)G_2(Y,k_2-k_1)\times \dots \times G_{m-1}(Y,k_{m-1}-k_{m-2})G_{m}(Y,k_{m}-k_{m-1}).$$
	\item{\textbf{PDTM for time derivative:}}\\ \\
	If $j(Y,\tau)=\frac{\partial^m}{\partial \tau^m}g(Y,\tau)$, $m = 1,2,3,\dots$ then $J(Y,k)=\frac{(k+m)!}{k!}G(Y,k+m)$.\\
	\item{\textbf{PDTM for space derivative:}}\\ \\
	If $j(Y,\tau)=\frac{\partial^m}{\partial y_i^m}g(y_1,y_2,y_3,\dots,y_m,\tau)$,\hs $i=1,2,\dots,m$ and\hs$m = 1,2,3,\dots$\\
	then  $J(Y,k)=\frac{\partial^m}{\partial y_i^m}G(y_1,y_2,y_3,\dots,y_m,k)$.
\end{enumerate}
\section{ EPDTM for collisional-breakage equation}
Let us derive the mathematical formulation of the EPDTM for solving  CBE \eqref{Col}. By employing the EIT on both sides of equation \eqref{Col} and leveraging the properties of the transformation, we obtain:					 
\begin{align*}
	\frac{1}{q}E[w(n,\tau)] -qw(n,0) =E\left(\int_{0}^{\infty}\int_{n}^{\infty}\mu(\epsilon,\rho) \alpha(n,\epsilon,\rho)f_1(w)d\epsilon d\rho-\int_{0}^{\infty}\mu(n,\epsilon)f_2(w)d\epsilon\right),
\end{align*}
	where \begin{align}\label{gfunction}
	f_1(w)= w(\epsilon,\tau)w(\rho,\tau) \hs \text{and} \hs f_2(w)= w(n,\tau)w(\epsilon,\tau),
\end{align} are the nonlinear parts.
Employing the inverse EIT to the aforementioned equation yields:	
\begin{align}\label{operator}
	w(n,\tau) =w(n,0)+ E^{-1}\bigg[qE\left[\int_{0}^{\infty}\int_{n}^{\infty}\mu(\epsilon,\rho) \alpha(n,\epsilon,\rho)f_1(w)d\epsilon d\rho-\int_{0}^{\infty}\mu(n,\epsilon)f_2(w)d\epsilon\right]\bigg].
\end{align}
In the subsequent stage, PDTM is now used  to decompose the nonlinear parts as
\begin{align}
	f_1(w)= \sum_{r=0}^{k}w_r(\epsilon,\tau)w_{k-r}(\rho,\tau) \hs \text{and} \hs f_2(w) = \sum_{r=0}^{k}w_r(n,\tau)w_{k-r}(\epsilon,\tau).
\end{align}
By following EPDTM, the solution can be written in series form as 
\begin{align}\label{series}
	w(n,\tau)=\sum_{k=0}^{\infty}w_k(n,\tau).
\end{align}	
To determine the coefficients of the series solution, substitute the series into \eqref{operator} and compare the coefficients. The iterations are as follows: 
\begin{align}\label{iter}
	w_{k+1}(n,\tau)=& E^{-1}\bigg[qE\bigg[\int_{0}^{\infty}\int_{n}^{\infty}\mu(\epsilon,\rho) \alpha(n,\epsilon,\rho)\bigg(\sum_{r=0}^{k}w_r(\epsilon,\tau)w_{k-r}(\rho,\tau)\bigg)d\epsilon d\rho \nonumber\\ &-\int_{0}^{\infty}\mu(n,\epsilon)\bigg(\sum_{r=0}^{k}w_r(n,\tau)w_{k-r}(\epsilon,\tau)\bigg)d\epsilon\bigg]\bigg].
\end{align}
Hence, the kth order approximated series solution is calculated by
\begin{align}\label{truncatedseries}
	\zeta_k(n,\tau):=\sum_{i=0}^{k}w_i(n,\tau).
\end{align}
\subsection{Convergence analysis:}
This section uses the fixed point theorem to specify convergence conditions for the approximated solutions towards the exact solution. Further, the maximum error bound in a Banach space is discussed.
Considering a Banach space $\mathbb{X}_{1}= \{ w : [0,T]\times[0,\infty)\rightarrow [0,\infty)\} $ with associated norm
\begin{align}\label{NORM}
\|w\|	= \sup_{\tau\in [0,\tau_{0}]} \int_{0}^{\infty}(c n)^{\beta}\left|w(n,\tau)  \right|dn < \infty,\hs \text{and} \hs  c, \beta >0.
\end{align}
Let us write equation \eqref{operator} in operator form as
\begin{align}\label{10}
w(n,\tau) = \widetilde{\mathcal{D}}[w],
\end{align}
where 
\begin{align}\label{11}
	\widetilde{\mathcal{D}}[w]= w(n,0)+E^{-1} \{q E[D[w]]\}
\end{align} 
and $D[w] $ is defined as 
$$ D[w] = \int_{0}^{\infty}\int_{n}^{\infty}\mu(\epsilon,\rho) \alpha(n,\epsilon,\rho)f_1(w)d\epsilon d\rho-\int_{0}^{\infty}\mu(n,\epsilon)f_2(w)d\epsilon.$$
\begin{thm}\label{T1}
	The operator $\widetilde{\mathcal{D}}$ is a contractive map, i.e., $\|\widetilde{\mathcal{D}}w-\widetilde{\mathcal{D}}w^{\ast}\| \leq \delta \|w-w^{\ast}\|$, $\forall$ $w,w^{\ast}\in \mathbb{X}_{1}$ if the following conditions hold
	\begin{itemize}
		\item[S1] $\alpha(n,\epsilon,\rho)=\frac{\sigma n^{j-1}}{\epsilon^j}$, $\sigma >0$ and j= 1,2,\ldots, such that $\int_{0}^{\epsilon}n \alpha(n,\epsilon,\rho)dn = \epsilon$ 
		\item[S2]  $\mu(n,\epsilon)\leq (n\epsilon)^{\beta}, \beta>0 \quad \text{for all} \quad n, \epsilon \in (0,\infty) $
		\item[S3] $ c $ is taken such that $\epsilon < c,$ and, 
		\item[S4] $ \delta= \tau_{0}^2\exp(2\tau_{0}P)[\|w_{0}\|+\frac{4\sigma P}{\beta}(1+\tau_{0}P)] <1 $ for suitable $\tau_{0}, \quad  \text{where} \quad P = \|w_{0}\|+\frac{\sigma}{\beta} T\|w_{0}\|_{1}^2.$
	\end{itemize}
\end{thm}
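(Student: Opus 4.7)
The plan is to isolate the nonlinear core of $\widetilde{\mathcal{D}}$ and control it by weighted $L^{1}$-estimates. Since $\widetilde{\mathcal{D}}w$ and $\widetilde{\mathcal{D}}w^{\ast}$ share the initial datum, their difference reduces to $\widetilde{\mathcal{D}}w-\widetilde{\mathcal{D}}w^{\ast}=E^{-1}\{qE[D[w]-D[w^{\ast}]]\}$. A direct computation from the Elzaki definition (or comparison with the Laplace integration rule) shows that $E^{-1}\{qE[\cdot]\}$ is simply integration in $\tau$ from $0$, so the task becomes estimating $\int_{0}^{\tau}(D[w]-D[w^{\ast}])(n,s)\,ds$ in the norm \eqref{NORM}.

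Next, I would decompose the nonlinear parts via the identity $ab-a'b'=a(b-b')+b'(a-a')$, splitting $f_{1}(w)-f_{1}(w^{\ast})$ and $f_{2}(w)-f_{2}(w^{\ast})$ each into two pieces linear in $w-w^{\ast}$ with prefactor either $w$ or $w^{\ast}$. Applying Fubini to each of the four resulting integrals, the inner piece $\int_{0}^{\epsilon}(cn)^{\beta}\alpha(n,\epsilon,\rho)\,dn$ is computed explicitly using S1 to equal $\sigma c^{\beta}\epsilon^{\beta}/(\beta+j)$, while S2 controls $\mu$ by $(\epsilon\rho)^{\beta}$. The surplus powers of $\epsilon^{\beta}$ and $\rho^{\beta}$ that remain are absorbed into the norm-defining weight via S3 ($\epsilon,\rho<c$), so each of the four terms becomes a constant multiple of $\|w\|\,\|w-w^{\ast}\|$ or $\|w^{\ast}\|\,\|w-w^{\ast}\|$.

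To close the contraction estimate, I need the a priori bound $\|w\|,\|w^{\ast}\|\leq P$. Applied to the fixed-point equation $w=\widetilde{\mathcal{D}}w$, the same estimates yield a Bihari-type inequality $\|w(\tau)\|\leq\|w_{0}\|+\int_{0}^{\tau}C\|w(s)\|^{2}\,ds$; using the mass conservation $\|w_{0}\|_{1}$ to linearize one factor in each quadratic term, Gronwall's lemma produces $\|w(\tau)\|\leq P\exp(\tau_{0}P)$, which explains the factor $\exp(2\tau_{0}P)$ in $\delta$ once the bounds for $w$ and $w^{\ast}$ are combined.

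Finally, assembling the pieces: the four contributions from the decomposition combine into an integrand controlled by $(4\sigma P/\beta)(1+\tau_{0}P)\|w-w^{\ast}\|$, which after the time integration from Step~1 and the supremum over $\tau\in[0,\tau_{0}]$ produces the $\tau_{0}^{2}$ prefactor in $\delta$, yielding exactly the constant stated in S4 (and hence $\delta<1$ for $\tau_{0}$ small). The main obstacle is the bookkeeping in the decomposition step: migrating the weight $(cn)^{\beta}$ cleanly across the double integral so that it matches the ambient norm requires careful combined use of S1 and S3, and tracking how the constants $\sigma$, $\beta$, $c$, $P$ aggregate into the precise expression for $\delta$ is the most delicate part.
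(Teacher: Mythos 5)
Your overall route differs from the paper's. The paper does not estimate $D[w]-D[w^{\ast}]$ directly: in its Step 1 it rewrites the equation with the integrating factor $\exp[R(n,\tau,w)]$, $R=\int_0^\tau\int_0^\infty\mu\,w\,d\epsilon\,ds$, proves a contraction estimate for the resulting equivalent operator $\widetilde{D}$ (this reformulation is where both the $\|w_0\|$ term and the $\exp(2\tau_0 P)$ factor in $\delta$ originate, via differences of exponentials acting on the initial datum), and then transfers contractivity back to $D$ before applying the Elzaki step. Your direct decomposition $ab-a'b'=a(b-b')+b'(a-a')$ applied to $f_1,f_2$, combined with Fubini, S1--S3, and the observation that $E^{-1}\{qE[\cdot]\}$ is integration in $\tau$, is a legitimate and in some respects cleaner alternative: it avoids the paper's unjustified assertion that contractivity of $\widetilde{D}$ implies contractivity of $D$. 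The price is that a direct estimate cannot reproduce the constant in S4 literally — the $\|w_0\|$ inside the bracket has no source in your computation, since the initial data cancel in $\widetilde{\mathcal{D}}w-\widetilde{\mathcal{D}}w^{\ast}$. You would obtain a smaller Lipschitz constant, which still proves the statement, but the claim of recovering ``exactly'' the constant of S4 is not accurate.

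The genuine gap is in how you obtain the a priori bound $\|w\|,\|w^{\ast}\|\le P$. You propose to derive it by applying your estimates to the fixed-point equation $w=\widetilde{\mathcal{D}}w$ and running a Gronwall argument. But the $w,w^{\ast}$ appearing in the contraction inequality are arbitrary elements of the space, not fixed points; if both satisfied the fixed-point equation for a map already known to be contractive they would coincide and there would be nothing left to prove. Invoking $w=\widetilde{\mathcal{D}}w$ to bound the arguments before the fixed point has been constructed is circular. The correct device, which the paper uses (tersely), is to introduce the ball $\mathcal{S}=\{w\in\mathbb{X}_1:\|w\|\le 2P\}$, verify that the operator maps $\mathcal{S}$ into itself — this self-mapping step is where $P=\|w_0\|+\frac{\sigma}{\beta}T\|w_0\|_1^2$ actually enters — and establish the Lipschitz estimate only for $w,w^{\ast}\in\mathcal{S}$. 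Note that this also means the theorem as stated (``for all $w,w^{\ast}\in\mathbb{X}_1$'') is really only proved on $\mathcal{S}$, a defect your write-up inherits rather than repairs.
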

\begin{proof}
	To  establish the contraction mapping for $\widetilde{\mathcal{D}}$, let us split the proof in following steps.\\
	\textbf{Step 1:}
	To construct the contraction mapping of $D$, the equation \eqref{10} can be expressed in the following equivalent form
	\begin{align*}
		\frac{\partial }{\partial \tau}[w(n,\tau)\exp[R[n,\tau,w]]] = \exp[R[n,\tau,w]] \int_{0}^{\infty}\int_{n}^{\infty} \mu(\epsilon,\rho) \alpha(n,\epsilon,\rho)w(\epsilon,\tau)w(\rho,\tau)d\epsilon d\rho,
	\end{align*}
where $R[n,\tau,w]=\int_{0}^{\tau}\int_{0}^{\infty}\mu(n,\epsilon)w(\epsilon,s)d\epsilon ds$.
Further, an equivalent operator $\widetilde{D}$ is given by
\begin{align*}
		\widetilde{D}[w]=w(n,0)\exp[-R(n,\tau,w)]+\int_{0}^{\tau} \exp[R(n,s,w)-R(n,\tau,w)]\int_{0}^{\infty}\int_{n}^{\infty}  \mu(\epsilon,\rho) \alpha(n,\epsilon,\rho)w(\epsilon,s)w(\rho,s)d\epsilon d\rho ds .
\end{align*}
Let, $w, w^{\ast} \in \mathbb{X}_{1} $, then we have
	\begin{align}\label{2co}
		\widetilde{D}[w]-	\widetilde{D}[w^{\ast}]= & w_{0}(n) [\exp[-R(n,\tau,w)]-\exp[-R(n,\tau,w^{\ast})]]+\nonumber \\&\int_{0}^{\tau}\exp[R(n,s,w)-R(n,\tau,w)] \int_{0}^{\infty}\int_{n}^{\infty}\mu(\epsilon,\rho) \alpha(n,\epsilon,\rho)w(\epsilon,s)w(\rho,s)d\epsilon d\rho ds  \nonumber \\&-\int_{0}^{\tau}\exp[R(n,s,w^{\ast})-R(n,\tau,w^{\ast})]\int_{0}^{\infty}\int_{n}^{\infty}\mu(\epsilon,\rho) \alpha(n,\epsilon,\rho)w^{\ast}(\epsilon,s)w^{\ast}(\rho,s)d\epsilon d\rho ds \nonumber \\
		= & w_{0}(n) \mathcal{G}(n,0,\tau)+ \int_{0}^{\tau} \mathcal{G}(n,s,\tau) \int_{0}^{\infty}\int_{n}^{\infty} \mu(\epsilon,\rho) \alpha(n,\epsilon,\rho)w(\epsilon,s)w(\rho,s)d\epsilon d\rho ds \nonumber \\& + \int_{0}^{\tau} \exp[R(n,s,w^{\ast})-R(n,\tau,w^{\ast})]\bigg[ \int_{0}^{\infty} \int_{n}^{\infty} \mu(\epsilon,\rho) \alpha(n,\epsilon,\rho)w^{\ast}(\epsilon,s)[w(\rho,s)-w^{\ast}(\rho,s)]d\epsilon d\rho+ \nonumber \\& \int_{0}^{\infty} \int_{n}^{\infty} \mu(\epsilon,\rho) \alpha(n,\epsilon,\rho)w(\rho,s)[w(\epsilon,s)-w^{\ast}(\epsilon,s)]d\epsilon d\rho \bigg] ds.
	\end{align}
Here $$\mathcal{G}(n,s,\tau) =\exp[R(n,s,w)-R(n,\tau,w)]-\exp[R(n,s,w^{\ast})-R(n,\tau,w^{\ast})]. $$
Using condition $S2$, one can easily obtain
	\begin{align}\label{12}
		\mathcal{G}(n,s,\tau) \leq & \exp \bigg\{-\int_{s}^{\tau} \int_{0}^{\infty} \mu(n,\epsilon) w^{\ast}(\epsilon,\tau) d\epsilon  d\tau \bigg\} (\tau-s)\| w-w^{\ast}\|,\nonumber \\
		\leq &  (\tau-s) \exp\{ (\tau-s) \mathcal{B}\} \| w-w^{\ast}\|\leq \mathcal{A} \| w-w^{\ast}\|,
	\end{align} 
	where, $\mathcal{A}= \tau \exp\{\tau \mathcal{B}\}$ and $\mathcal{B} = \max\{\| w\| , \| w^{\ast}\|\}$.
 Let us define a set $\mathcal{S}=\{ w \in \mathbb{X}_{1} : \|w\|\leq 2 P\}$. It can be easily demonstrated that $\widetilde{D}$ maps $\mathcal{S}$ to itself. Further, $w,w^{\ast}\in \mathcal{S} $, implies that $\mathcal{B} \leq 2P$. By taking norm on both sides of equation \eqref{2co} leads to
\begin{align*}
\|	\widetilde{D}[w]-\widetilde{D}[w]^{\ast}\|\leq& 
\mathcal{A} \| w_{0}\| \| w-w^{\ast}\| + \frac{\sigma}{\beta} \mathcal{A}  \| w-w^{\ast}\|\int_{0}^{\tau} \| w\|^2 ds + \frac{\sigma}{\beta} \int_{0}^{\tau} \exp\{ \tau \mathcal{B} \} [( \| w\| + \|  w ^{\ast}\| ) \|  w- w^{\ast}\|] ds \\
\leq & \mathcal{A} \bigg[  \|  w_{0}\| + \frac{\sigma}{\beta} \tau  \|  w\|^2 +  \frac{\sigma}{\beta} ( \|  w\| + \|  w ^{\ast}\| ) \bigg] \|  w- w^{\ast}\|\\
\leq & \Delta	\| w- w^{\ast}\|
	\end{align*}
with $\Delta = \tau_{0}\exp(2\tau_{0}P)[\| w_{0}\|+\frac{4\sigma P}{\beta}(1+\tau_{0}P)] <1 $ for suitable $\tau_{0}$. Hence, $\widetilde{D}$ is contractive.
The equivalency of $\widetilde{D}$ enables us to have that the operator $D$ is also contractive, i.e.,
	\begin{align}\label{cont_tN}
		\|D[ w]-D[ w^{\ast}]\| \leq \Delta \| w- w^{\ast}\|.
	\end{align}
\textbf{Step 2:}
By applying \eqref{cont_tN}, along with the definitions and fundamental properties of the Elzaki and inverse EITs, we can get 
\begin{align*}
		\|\tilde{\mathcal{D}} w-\tilde{\mathcal{D}} w^*\|&= \|E^{-1}[qE\left[D w\right]]-E^{-1}[qE\left[D w^{\ast}\right]]\| \\
		&= \sup_{\tau\in [0,\tau_{0}]} \int_{0}^{\infty}[(c n)^{\beta}| \frac{1}{2 \pi} \int_{0}^{\infty}[\frac{1}{q^2}\int_{0}^{\infty} (D w-D w^*)e^{-q\tau}d\tau]e^{q\tau}q dq|]dn\\
		&=\frac{1}{2 \pi} \int_{0}^{\infty}\bigg(\frac{e^{q\tau}}{q}\int_{0}^{\infty}[e^{-q\tau}[\sup_{\tau\in [0,\tau_{0}]} \int_{0}^{\infty}(c n)^{\beta}|D w-D w^*|dn]]d\tau\bigg)dq\\
		&\leq\frac{1}{2 \pi} \int_{0}^{\infty}\bigg(\frac{1}{q} \int_{0}^{\infty}\Delta \| w- w^{\ast}\| e^{-q\tau}d\tau\bigg)e^{q\tau}dq\\
		&=\mathcal{L}^{-1}\bigg[\frac{1}{q}\left[\mathcal{L}\left[\Delta\|w-w^{\ast}\|\right]\right]\bigg] \\
	&=\Delta \tau_{0}\| w- w^*\|. 
	\end{align*}
Hence, $\tilde{\mathcal{D}}$ is contractive if $\delta=\Delta \tau_{0}<1.$ 
\end{proof}	
\begin{thm}\label{T2}
The truncated solution converges to  the exact solution $w$ if the non-linear operator  $\tilde{\mathcal{D}}$ holds the conditions of Theorem \ref{T1} and $\| w_1\| < \infty$, $\delta <1$ with the maximum error bound 
\begin{align}\label{errorbound1d} 
		\| w-\zeta_{k}\| \leq \dfrac{\delta^\xi}{1-\delta} \| w_1\|.
\end{align}
\end{thm}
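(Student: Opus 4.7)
The plan is to exploit the contraction established in Theorem~\ref{T1} to show that the partial sums $\zeta_k$ form a Cauchy sequence in the Banach space $\mathbb{X}_{1}$, and then pass to the limit to identify that limit with $w$ while reading off the error estimate. The starting observation is that, by the Cauchy-product structure of the decompositions of $f_1$ and $f_2$, the iteration \eqref{iter} can be recognised as $\zeta_{k+1} = \widetilde{\mathcal{D}}[\zeta_k]$, so the truncated series are exactly the Picard iterates of $\widetilde{\mathcal{D}}$ starting from $\zeta_0 = w_0$.

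First, I would prove by induction that $\|w_k\| \leq \delta^{k-1}\|w_1\|$ for every $k \geq 1$. The base case is immediate. For the inductive step, writing $w_{k+1} = \zeta_{k+1} - \zeta_k = \widetilde{\mathcal{D}}[\zeta_k] - \widetilde{\mathcal{D}}[\zeta_{k-1}]$ and applying the contraction inequality from Theorem~\ref{T1} gives $\|w_{k+1}\| \leq \delta \|\zeta_k - \zeta_{k-1}\| = \delta\|w_k\|$, and iterating delivers the claimed geometric decay.

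Next, for any integers $m > k$ the triangle inequality yields
\begin{align*}
\|\zeta_m - \zeta_k\| \leq \sum_{i=k+1}^{m}\|w_i\| \leq \|w_1\|\sum_{i=k+1}^{m}\delta^{i-1},
\end{align*}
and since $\delta < 1$ the geometric tail is majorised by $\delta^{k}/(1-\delta)\cdot\|w_1\|$, which is the bound appearing in \eqref{errorbound1d} (with $\xi=k$). Simultaneously this shows that $\{\zeta_k\}$ is Cauchy in the complete space $\mathbb{X}_{1}$, so it converges to some $\widetilde w \in \mathbb{X}_{1}$; continuity of $\widetilde{\mathcal{D}}$ (itself a consequence of contraction) together with $\zeta_{k+1} = \widetilde{\mathcal{D}}[\zeta_k]$ forces $\widetilde w$ to be a fixed point of $\widetilde{\mathcal{D}}$, and the uniqueness part of the Banach contraction principle identifies $\widetilde w$ with $w$. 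Letting $m \to \infty$ in the Cauchy estimate then produces the stated error bound.

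The main obstacle I anticipate is justifying the identification $\zeta_{k+1} = \widetilde{\mathcal{D}}[\zeta_k]$ in the presence of the bilinear nonlinearities. The PDTM-style decomposition replaces $w(\epsilon,\tau)w(\rho,\tau)$ by the convolution-like sum $\sum_{r=0}^{k} w_r(\epsilon,\tau) w_{k-r}(\rho,\tau)$, so one must verify, by a reindexing of the double sum, that summing these contributions from $k=0$ to $m$ reproduces $\zeta_m(\epsilon,\tau)\zeta_m(\rho,\tau)$ exactly (and analogously for $f_2$), so that applying $\widetilde{\mathcal{D}}$ to $\zeta_k$ really does yield $\zeta_{k+1}$. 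Once that algebraic identification is secured, everything else is a routine Banach fixed-point argument combined with a geometric-series tail estimate.
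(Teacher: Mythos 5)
Your overall skeleton (contraction $\Rightarrow$ geometric decay of $\|\zeta_{k+1}-\zeta_k\|$ $\Rightarrow$ Cauchy sequence $\Rightarrow$ tail estimate $\delta^{\xi}\|w_1\|/(1-\delta)$) is exactly the paper's argument. But the step you flag as your "main obstacle" and then claim can be "secured by a reindexing of the double sum" is precisely the step that fails, and it cannot be repaired by reindexing. The PDTM decomposition accumulates, after summing the iteration \eqref{iter} from $i=0$ to $k-1$, the Cauchy-type double sum
\begin{align*}
\sum_{i=0}^{k-1}\sum_{r=0}^{i}w_r(\epsilon,\tau)\,w_{i-r}(\rho,\tau)\;=\;\sum_{\substack{r,s\ge 0\\ r+s\le k-1}}w_r(\epsilon,\tau)\,w_s(\rho,\tau),
\end{align*}
whereas
\begin{align*}
\zeta_{k-1}(\epsilon,\tau)\,\zeta_{k-1}(\rho,\tau)\;=\;\sum_{0\le r,s\le k-1}w_r(\epsilon,\tau)\,w_s(\rho,\tau).
\end{align*}
The two index sets differ by all pairs with $r,s\le k-1$ but $r+s>k-1$, so the identification $\zeta_{k}=\widetilde{\mathcal{D}}[\zeta_{k-1}]$ is simply not an identity: the truncated sums $\zeta_k$ are \emph{not} the Picard iterates of $\widetilde{\mathcal{D}}$, and the uniqueness/convergence conclusions of the Banach fixed-point theorem do not apply to them verbatim. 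Consequently the inequality $\|\zeta_{k+1}-\zeta_k\|\le\delta\|\zeta_k-\zeta_{k-1}\|$, which is the engine of the whole proof, does not follow from Theorem \ref{T1} by the route you describe.

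For what it is worth, the paper does not claim the exact identification either: it only asserts the one-sided bound $\sum_{i=0}^{k}\sum_{r=0}^{i}w_rw_{i-r}\le f_1(\zeta_k)$, giving $\zeta_k\le\widetilde{\mathcal{D}}[\zeta_{k-1}]$ pointwise, and then invokes the contraction estimate on the $\zeta_k$ as if they were Picard iterates. That domination itself requires every $w_i$ to be nonnegative (false already in Test case 1, where $w_1$ changes sign at $n=2$), and a pointwise inequality between $\zeta_k$ and $\widetilde{\mathcal{D}}[\zeta_{k-1}]$ does not yield a norm inequality for the differences. So your instinct that this identification is the crux was correct; the honest fix is either to redefine the approximants as genuine Picard iterates $\zeta_{k+1}:=\widetilde{\mathcal{D}}[\zeta_k]$ (after which your argument goes through verbatim), or to control the discrepancy $\widetilde{\mathcal{D}}[\zeta_{k-1}]-\zeta_k$, i.e.\ the missing off-diagonal terms $\sum_{r+s>k-1,\;r,s\le k-1}w_rw_s$, directly. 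As written, both your proposal and the paper's proof share the same gap.
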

\begin{proof}
Given that,
\begin{align*}
\zeta_{k} = &\sum_{i=0}^{k} w_{i}(n,\tau)
= w(n,0)+\sum_{i=0}^{k-1}E^{-1}\bigg[qE\bigg[\int_{0}^{\infty}\int_{n}^{\infty}\mu(\epsilon,\rho)\alpha(n,\epsilon,\rho)\bigg(\sum_{r=0}^{i}w_r(\epsilon,\tau)w_{i-r}(\rho,\tau)\bigg)d\epsilon d\rho\\ &-\int_{0}^{\infty}\mu(n,\epsilon)\bigg(\sum_{r=0}^{i} w_r(n,\tau) w_{i-r}(\epsilon,\tau)\bigg)d\epsilon\bigg]\bigg]\\	=&w(n,0)+E^{-1}\bigg[qE\bigg[\int_{0}^{\infty}\int_{n}^{\infty}\mu(\epsilon,\rho)\alpha(n,\epsilon,\rho)\bigg(\sum_{i=0}^{k-1}\sum_{r=0}^{i}w_r(\epsilon,\tau)w_{i-r}(\rho,\tau)\bigg)d\epsilon d\rho\\ &-\int_{0}^{\infty}\mu(n,\epsilon)\bigg(\sum_{i=0}^{k-1}\sum_{r=0}^{i}w_r(n,\tau)w_{i-r}(\epsilon,\tau)\bigg)d\epsilon\bigg]\bigg].
\end{align*}
We have $\sum_{i=0}^{k}\sum_{r=0}^{i}w_r(\epsilon,\tau)w_{i-r}(\rho,\tau)\leq f_1(\zeta_k)$ and $\sum_{i=0}^{k}\sum_{r=0}^{i}w_r(n,\tau)w_{i-r}(\epsilon,\tau) \leq f_2(\zeta_k)$, utilizing these in the above equation give 
\begin{align}
\zeta_k(n,\tau) \leq w_0(n)+E^{-1}\bigg[qE\bigg[\int_{0}^{\infty}\int_{n}^{\infty}\mu(\epsilon,\rho)\alpha(n,\epsilon,\rho)f_1(\zeta_{k-1})d\epsilon d\rho-\int_{0}^{\infty} \mu(n,\epsilon)f_2(\zeta_{k-1})d\epsilon\bigg]\bigg].
\end{align} 
This leads to the equivalent operator form as follows:
\begin{align}
\zeta_k \leq \widetilde{\mathcal{D}}[\zeta_{k-1}].
\end{align}
Owing to the property of contraction mapping of $\tilde{\mathcal{D}}$, it is evident that
\begin{align*}
\|	\zeta_{k+1}-\zeta_k\| \leq \delta \|\zeta_k-\zeta_{k-1}\|\leq \delta^2\|\zeta_{k-1}-\zeta_{k-2}\|\leq\cdots\leq \delta^k\|	\zeta_{1}-\zeta_0\|.
\end{align*}
Now, for all $k,\xi \in \mathbb{N}$ with $k>\xi$, the triangle inequality yields
\begin{align*}
\|\zeta_k-\zeta_\xi\| &\leq \|	\zeta_k-	\zeta_{k-1}\|+\|	\zeta_{k-1}-\zeta_{k-2}\|+\cdots+\|\zeta_{\xi+1}-	\zeta_\xi\| \leq (\delta^{k-1}+\delta^{k-2}+\cdots+\delta^{\xi})\|	\zeta_1-	\zeta_0\|\\ &= \dfrac{\delta^\xi (1-\delta^{k-\xi})}{1-\delta}\|w_1\| \leq \dfrac{\delta^\xi}{1-\delta}\|w_1\|.
\end{align*}
The expression above approaches to zero as $\xi\rightarrow \infty$, which implies the existence of a function $\zeta$ such that $\lim\limits_{k\rightarrow \infty}	\zeta_k=\zeta$, therefore,
\begin{align*}
w(n,\tau)=\sum_{i=0}^{\infty} w_i=\lim\limits_{k\rightarrow \infty}	\zeta_k=\zeta.
\end{align*}
By taking the limit as $k\rightarrow \infty$ while keeping $m$ fixed, the theoretical error bound given by \eqref{errorbound1d} is derived.
\end{proof}
\section{Numerical Results}
In this section, we utilize the EPDTM to address the non-linear CBE with varying combinations of breakage distribution, collision kernels, and initial distributions. To emphasize the originality of our scheme, we compare our solutions with existing analytical ones, when available, using error tables and graphs. For instances where exact solutions are elusive, we compare the results obtained using the EPDTM with those derived from the FVM, as discussed in \cite{das2020approximate}. Notably, the FVM is appropriate for this comparison because it is best-suited to solve such model due to automatic mass conservation. Also, it is the only computational approach validated and tested for the CBE. Concerning our EPDTM, interestingly, in one example, we successfully derive the $k$th-order general term which allows us to express the solution in a closed form. In other cases, although we could determine the complete structure of the solution, the integration of more complex functions prevented us from achieving a fully comprehensive solution. To address this, we identified the iterative solution's initial terms, thereby enabling finite term series solutions.\\

\textbf{Test case 1}
For the first case, we consider binary breakage kernel $\alpha(n,\epsilon,\rho)= \frac{2}{\epsilon}$ and product collisional kernel $\mu(n,\epsilon)= n\epsilon$ with an exponential initial distribution  $w(n,0) = e^{-n}$. In the field of aerosol science, the product kernel elucidates the greater susceptibility of larger aerosol particles to collisional occurrences and subsequent fragmentation, resulting in the generation of smaller aerosol particles. The analytical solution for the number density distribution is studied in \cite{ziff1985kinetics} and provided as $w(n,\tau)= (\tau+1)^2 e^{-(\tau+1)n}$.\\
Using the iterative scheme provided in equation \eqref{iter}, the approximated solutions are obtained as
\begin{align*}
	w_0(n,\tau) =& e^{-n}, \hs
	w_1(n,\tau) = \tau \left(-e^{-n}\right) (n-2),\hs 
	w_2(n,\tau) =\frac{1}{2} \tau^2 e^{-n} \left(n^2-4 n+2\right),\\
	w_3(n,\tau) =& -\frac{1}{6} \tau^3 e^{-n} n \left(n^2-6 n+6\right), \hs
	w_4(n,\tau) =\frac{1}{24} \tau^4 e^{-n} n^2 \left(n^2-8 n+12\right).
\end{align*}
Following the similar pattern, the general term is achieved as
$$w_{k}=\frac{(-1)^k}{k!}e^{-n}\tau^{k}(n^k-2k n^{k-1}+k(k-1)n^{k-2}), \quad \text{for} \quad k\geq 0$$
Thus, the value of EPDTM series solution becomes
\begin{align*}
 \zeta_k(n,\tau)&=\sum_{i=0}^{k}w_i(n,\tau)= \sum_{i=0}^{k}
\frac{(-1)^i}{i!}e^{-n}\tau^{i}(n^i-2i n^{i-1}+i(i-1)n^{i-2}).
\end{align*}
It is evident that the preceding series leads to the precise solution
$w(n,\tau)= (\tau+1)^2 e^{-(\tau+1)n} $ as $k \to \infty$.
Hence, the EPDTM provides exactly the analytical solution for this example.\\

\begin{figure}[htb!]
	\centering
	\subfigure[Exact and EPDTM number density ]{\includegraphics[width=0.35\textwidth]{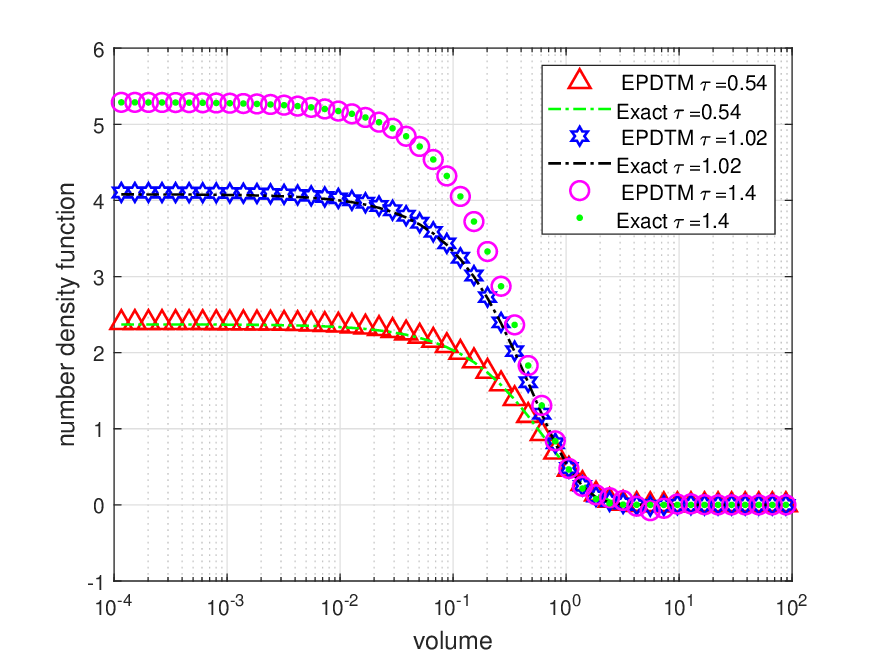}}
	\subfigure[Exact and EPDTM zeroth moments ]{\includegraphics[width=0.35\textwidth]{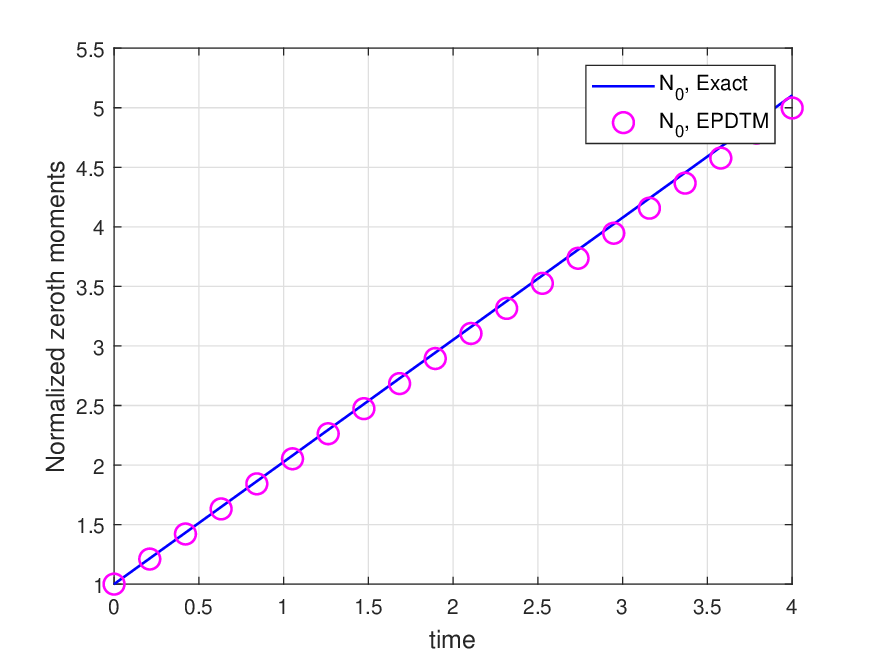}}
	\subfigure[Exact and EPDTM first moments ]{\includegraphics[width=0.35\textwidth]{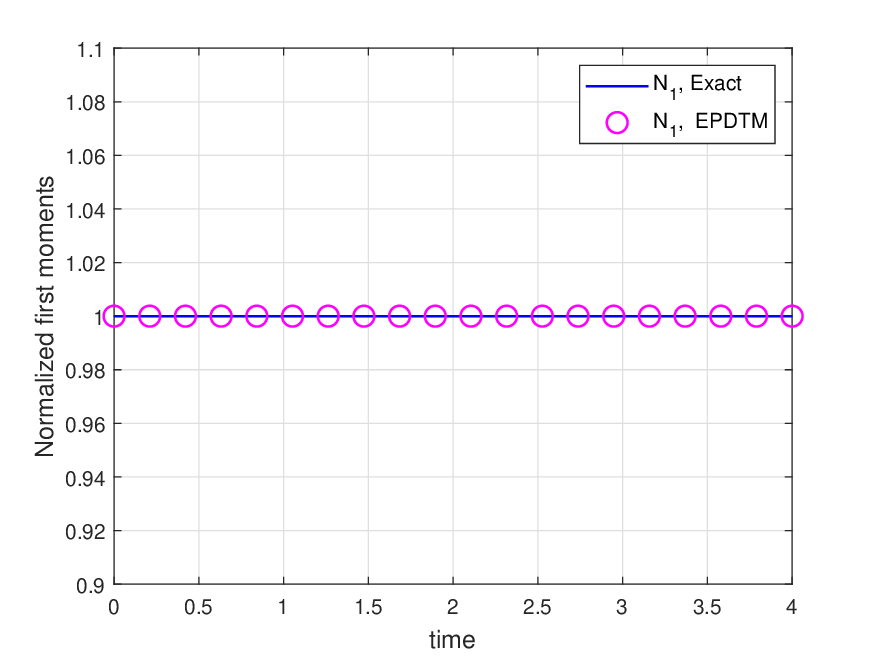}}
		\subfigure[Exact and EPDTM second moments ]{\includegraphics[width=0.35\textwidth]{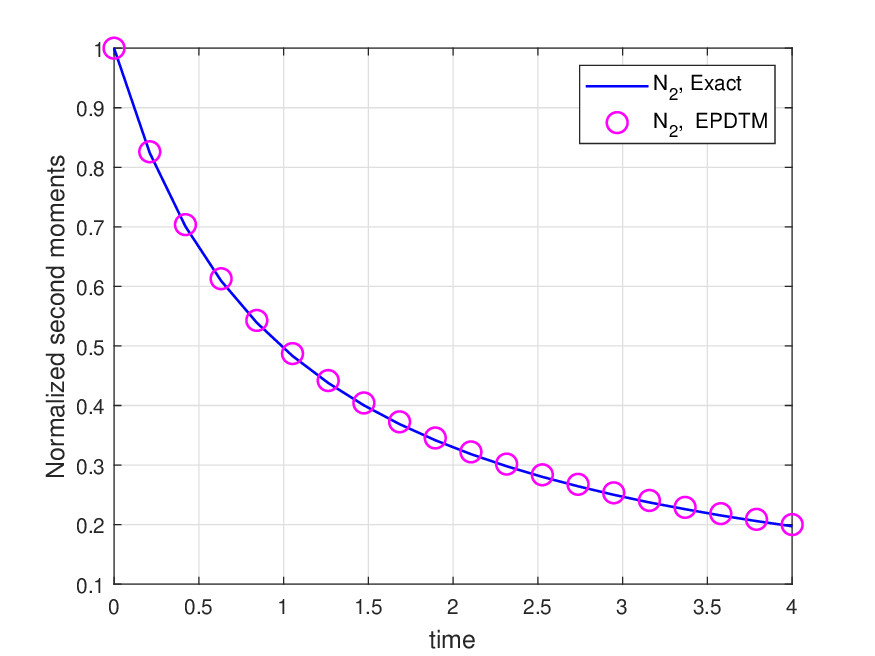}}
	\caption{Particle density distribution and moments functions for  Test case 1}
	\label{fig1}
\end{figure}
To evaluate its effectiveness, the $5^{th}$ iterative approximation for the number density function is compared with the exact solution  in Figure \ref{fig1}(a) at various time intervals. The figure illustrates that the EPDTM solution aligns perfectly with the analytical number density function. Initially, at time $\tau $= 0.54, the system contains fewer bubbles, which are also smaller in size. However, as time progresses to $\tau $= 1.02, there is a noticeable increase in the density of these small-sized bubbles. This increase is due to collisions among larger particles, causing them to fragment into smaller ones. When the system reaches $\tau $ = 1.4, it predominantly comprises smaller bubbles. This demonstrates that the EPDTM can accurately predict the behavior of bubbles within the system.
Table 1 provides the error between the exact and EPDTM approximated solutions, highlighting how discrepancies progress over time. Despite this, error can be controlled by taking more terms in the iterative series solution.  Additionally, Table 2 presents the numerical errors at computational time $\tau$ = 1 for various term solutions. Upon analyzing the errors and CPU processing times, it is clear that EPDTM offers a substantially better approximation.
Proceeding further, Figure \ref{fig1}(b, c, d) illustrates the zeroth, first and second moments of the density functions, respectively. The increasing trend of the zeroth moment in Figure \ref{fig1}(b) indicates that the total number of particles in the system rises over time, which is attributed to particle splitting following successful collisions. Despite this increase, the total mass of the particles remains conserved, as depicted in Figure \ref{fig1}(c). The decreasing trend of the second moment in Figure \ref{fig1}(d) shows that the system initially absorbs more energy and less over time. This occurs because the system initially contains larger particles, which require more energy to break apart. As tiny particles are produced through fragmentation, less energy is needed to break them. Interestingly, all the approximated moments calculated via the proposed algorithm overlap with the precise moments.
\begin{table}
\begin{center}
\begin{tabular}{ p{0.5cm}| p{3.5cm} p{3.5cm} p{3.5cm}  }
\multirow{2}{*} {$\tau$}
  
& Exact & EPDTM & EPDTM error\\ \hline
& & &\\ 
0.3 & 0.000690   & 0.000690 & 1.52217 E-5 \\ 
& & &\\ 
0.6 &0.0001734 & 0.000954 & 7.78138 E-4\\
& &   &\\ 
0.9 &0.00004042  &0.0073062 & 7.26578  E-3 \\
& &  &\\ 
1.2 & 8.95691 E-6  & 3.42759E-2 &  3.4267E-2 \\
& &  &\\ 
1.5 & 1.91189 E-6  & 1.1978E-1  & 1.1976E-1\\
& & & \\ 
\hline
\end{tabular}
\end{center}
\caption{Absolute errors for $n = 6$ at different time levels for Test case 1}
\label{table1}
\end{table}
\begin{table}
\begin{center}
\begin{tabular}{ |p{3.5cm}| p{3.5cm}| p{3.5cm}|   }
\hline
\multirow{2}{*} {No. of terms ($k$)} 
 
& Error & CPU time (sec.) \\ \hline
& & \\ 
5 & 3.4267 E-2   & 8.25  \\ 
& & \\ 
10 &2.98417E-3 & 12.88 \\
& &   \\ 
15 &9.36601E-4  & 15.40\\
& &  \\ 
20 & 2.0091 E-5  & 18.60  \\
& &  \\ 
\hline
		\end{tabular}
\end{center}
\caption{ Error and computational time for Test case 1 at $\tau$ = 1}
\label{table2}
\end{table}\\

\textbf{Test case 2}
Let us consider again the product coagulation kernel $\mu(n,\epsilon)= n\epsilon$  and the binary breakage kernel $\alpha(n,\epsilon,\rho)= \frac{2}{\epsilon}$ but with a Gaussian initial condition $w(n,0) = \frac{e^{\frac{-n^2}{2}}}{\sqrt{2 \pi}}$.
By employing the EPDTM given in \eqref{iter} for these set of parameters, we get
\begin{align*}
w_1(n,\tau) =&\frac{\tau e^{-\frac{n^2}{2}}}{2 \pi } \left(\sqrt{2 \pi } e^{\frac{n^2}{2}} \text{erfc}\left(\frac{n}{\sqrt{2}}\right)-n\right),\hs 
w_2(n,\tau) =-\frac{\tau^2 e^{-\frac{n^2}{2}}} {8 \pi ^{3/2}} \left(6n \sqrt{\pi } e^{\frac{n^2}{2}}  \text{erfc}\left(\frac{n}{\sqrt{2}}\right)-\sqrt{2} n^2-2 \sqrt{2}\right),\\
w_3(n,\tau) =& \frac{\tau^3 e^{-\frac{n^2}{2}} n} {24n \pi ^2} \left(6n \sqrt{2 \pi } e^{\frac{n^2}{2}}  \text{erfc}\left(\frac{n}{\sqrt{2}}\right)-n^2-6\right), \\	
w_4(n,\tau) =&-\frac{\tau^4 e^{-\frac{n^2}{2}} n^2} {192 \pi ^{5/2}} \bigg(20n \sqrt{\pi } e^{\frac{n^2}{2}} \text{erfc}\left(\frac{n}{\sqrt{2}}\right)-\sqrt{2} n^2-12 \sqrt{2}\bigg).
\end{align*}
Here, erfc($n$) denotes the complementary error function for non-negative values of $n$. 
\begin{figure}[htb!]
	\centering
	\subfigure[FVM and EPDTM number density ]{\includegraphics[width=0.35\textwidth]{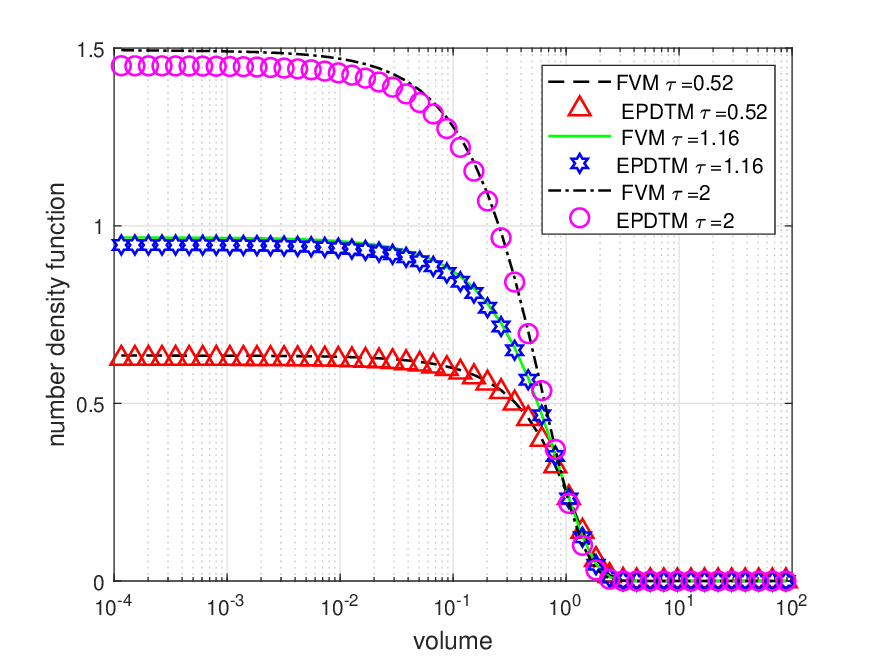}}
	\subfigure[FVM and EPDTM zeroth moments]{\includegraphics[width=0.35\textwidth]{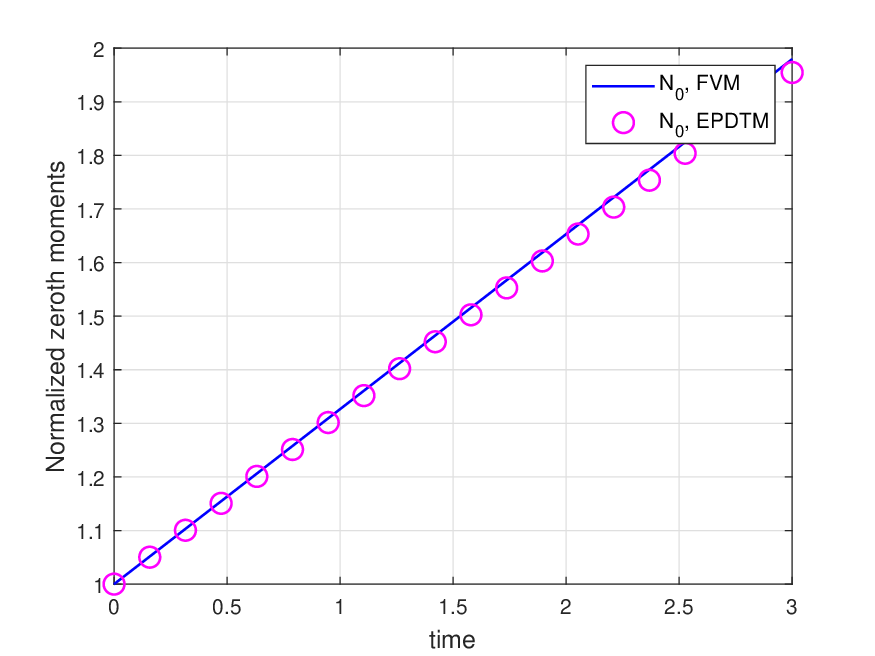}}
	\subfigure[FVM and EPDTM first moments ]{\includegraphics[width=0.35\textwidth]{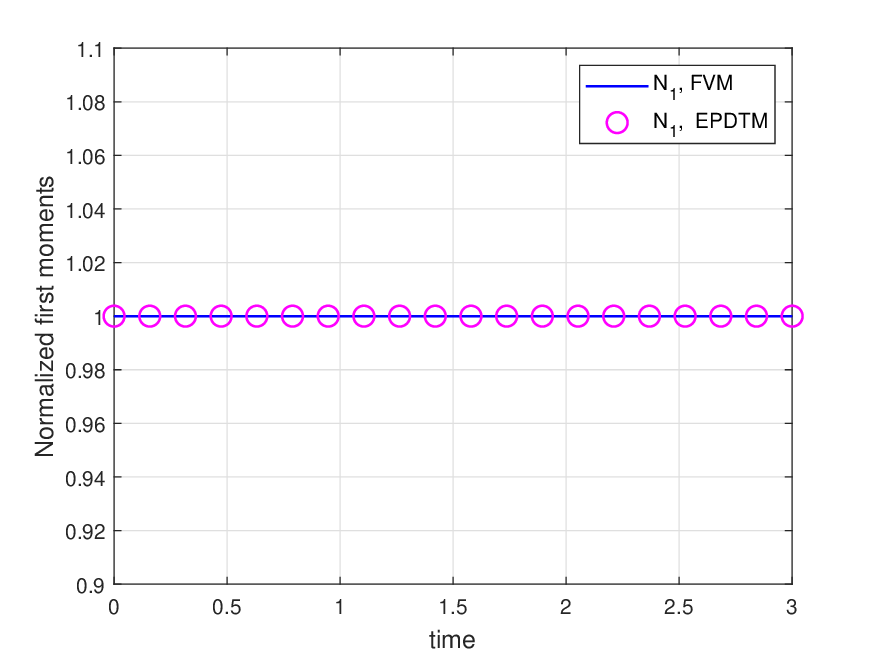}}
	\subfigure[FVM and EPDTM second moments ]{\includegraphics[width=0.35\textwidth]{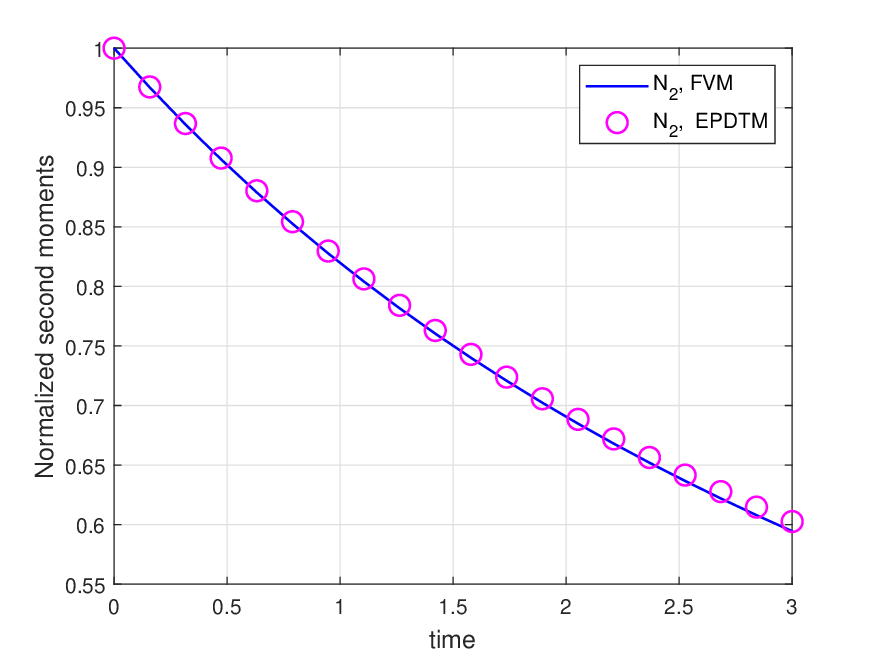}}
	\caption{Particle density distribution and moments functions for  Test case 2}
	\label{fig2}
\end{figure}
As the expression is getting complex, we were not able to find the closed form solution, though the higher order terms of the series can be easily calculated using MATHEMATICA. As previously stated, in the absence of an exact solution, we compare EPDTM’s six-term series approximation to FVM’s outcomes.
Figure \ref{fig2}(a) compares the number density functions computed using FVM and EPDTM at various time steps. The figure demonstrates that the EPDTM density function and the FVM density function are an outstanding complement. Moreover, Figure \ref{fig2}(b) displays the zeroth moment, showing linear growth for the chosen kernel parameters, with results almost overlapping. Figure \ref{fig2}(c) confirms the system’s mass conservation through the first moment. Additionally, the second moment derived using EPDTM closely matches the results from FVM, as illustrated in Figure  \ref{fig2}(d).\\

\textbf{Test case 3}
Now consider the polymerization kernel $\mu(n,\epsilon)= (n+a)^\frac{1}{3}(\epsilon+a)^\frac{1}{3}$ with $a = 0$ and binary breakage kernel $\alpha(n,\epsilon,\rho)= \frac{2}{\epsilon}$ with an exponential IC $w(n,0) = e^{-n}$.\\
The polymerization kernel is critical in predicting and controlling the molecular weight distribution, optimizing polymerization processes, and tailoring the properties of polymers for specific industrial applications. \\
Analogous to previous cases, the first few components of iterative series solutions are
\begin{align*}
	w_1(n,\tau) =&\tau e^{-n} \Gamma \left(\frac{4}{3}\right) \left(2 e^n \Gamma \left(\frac{1}{3},n\right)-\sqrt[3]{n}\right),\\ 
	w_2(n,\tau) =&-\frac{1}{6} \tau^2 e^{-n} \Gamma \left(\frac{4}{3}\right) \bigg(-3 n^{2/3} \Gamma \left(\frac{4}{3}\right)+\sqrt[3]{n} \Gamma \left(\frac{2}{3}\right)+42 e^n \sqrt[3]{n} \Gamma \left(\frac{4}{3}\right) \Gamma \left(\frac{1}{3},n\right)-3 e^n \Gamma \left(\frac{5}{3}\right) \Gamma \left(\frac{1}{3},n\right)\\&-30 e^n \Gamma \left(\frac{4}{3}\right) \Gamma \left(\frac{2}{3},n\right)\bigg).
\end{align*}
It can be seen that the approximated solution contains tedious terms, hence, a 3rd order approximated solution is considered for the  numerical simulations.		
\begin{figure}[htb!]
	\centering
	\subfigure[EPDTM and FVM number density ]{\includegraphics[width=0.35\textwidth]{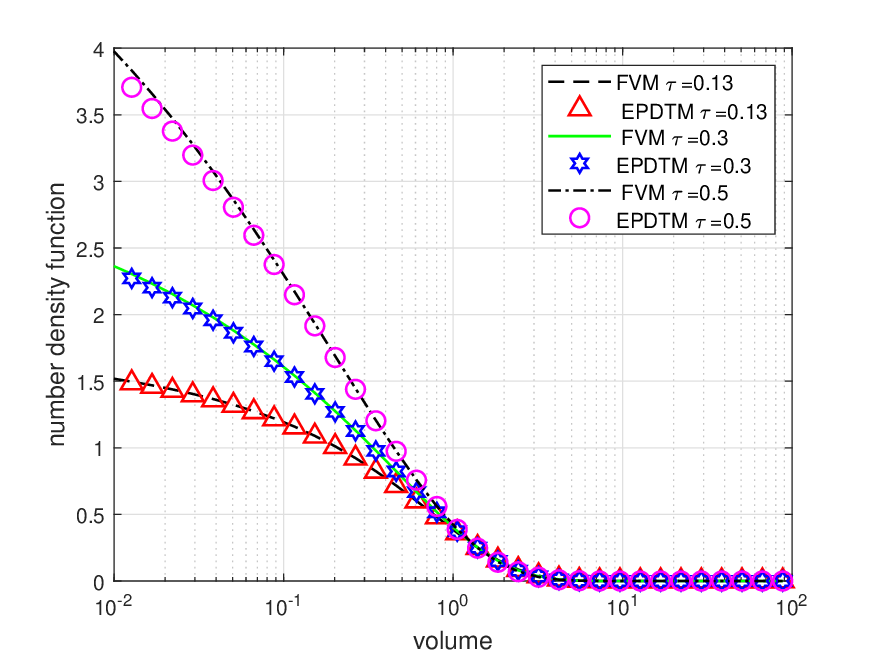}}
	\subfigure[FVM and EPDTM zeroth moments]{\includegraphics[width=0.35\textwidth]{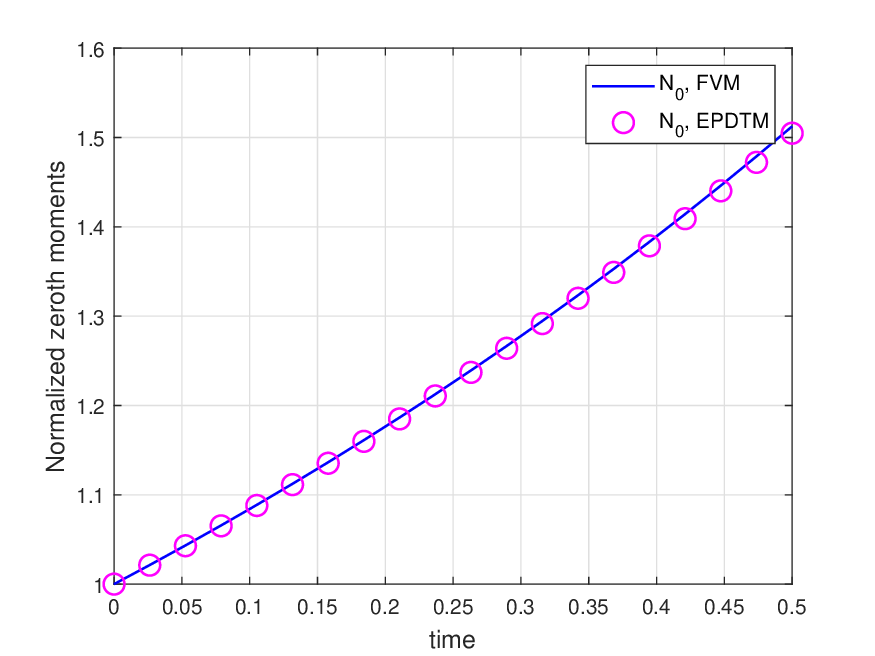}}
	\subfigure[FVM and EPDTM first moments]{\includegraphics[width=0.35\textwidth]{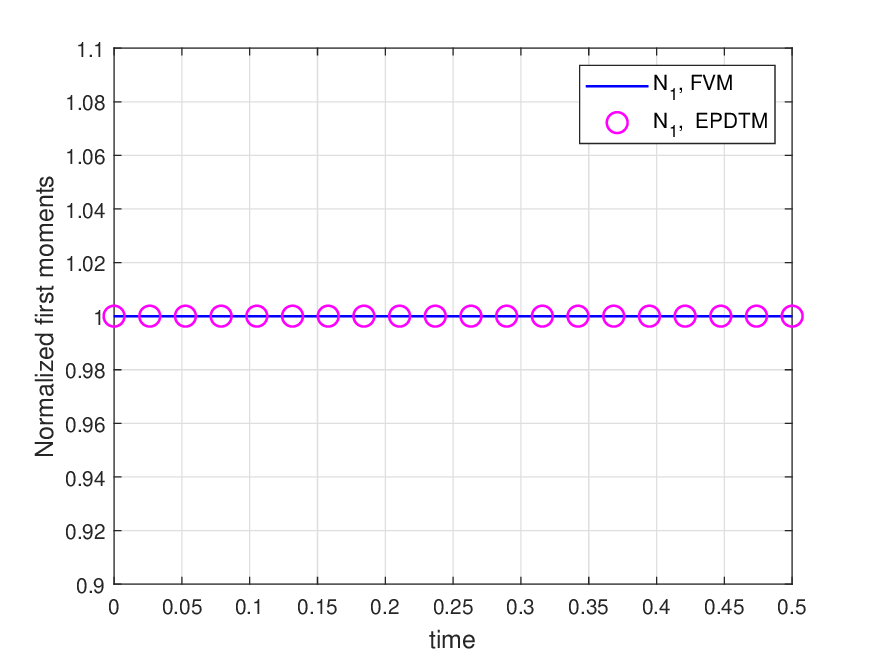}}
	\subfigure[FVM and EPDTM second moments]{\includegraphics[width=0.35\textwidth]{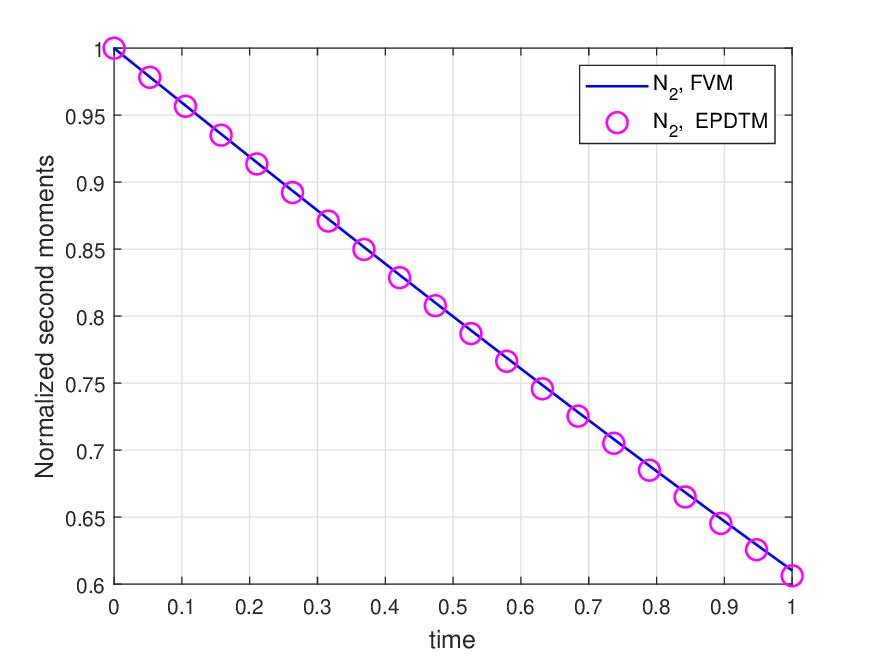}}
	\caption{Particle density distribution and moments functions for Test case 3}
	\label{fig3}
\end{figure}
Once again, due to the unavailability of analytical solutions, the newly obtained analytical results for the particle density functions and moments are compared against the existing FVM solutions to validate the accuracy of the proposed method. Figure \ref{fig3}(a) illustrates that the results for the particle density function closely match those obtained from FVM. From Figure \ref{fig3}(b), it is evident that the zeroth-order moment exhibits highly accurate results. The proposed scheme maintains adherence to the mass conservation law with just a few series terms (refer to Figure \ref{fig3}(c)). Also, the second moment obtained using EPDTM matches well with the FVM’s outcomes, see Figure \ref{fig3}(d).\\

\textbf{Test case 4}
To underscore the extensive versatility of the proposed method, we have opted for a more intricate kernel, i.e., Product coagulation kernel $\mu(n,\epsilon)= n\epsilon$, breakage kernel $\alpha(n,\epsilon,\rho)= \frac{3}{2 \epsilon^{\frac{1}{2}} n^{\frac{1}{2}}} $, wherein the fragmentation of a larger particle results in the formation of three smaller particles \cite{austin1976effect}. This kernel has been employed in investigations concerning the dynamics of ball milling and granulation processes \cite{park2020application, muanpaopong2023comparative}. \\
Following the recursive scheme defined in \eqref{iter} and using initial data $w(n,0) = e^{-n}$, the first few terms are defined as follows 
\begin{align*}
	w_1(n,\tau) =&\frac{\tau e^{-n} \left(3 \sqrt{\pi } e^n \text{erfc}\left(\sqrt{n}\right)-4 n^{3/2}+6 \sqrt{n}\right)}{4 \sqrt{n}},\\ 
	w_2(n,\tau) =&-\frac{\tau^2 e^{-n} \left(30 n \sqrt{\pi } e^n  \text{erfc}\left(\sqrt{n}\right)-9 \sqrt{\pi } e^n \text{erfc}\left(\sqrt{n}\right)+48 n^{3/2}-16 n^{5/2}-18 \sqrt{n}\right)}{32 \sqrt{n}},\\
	w_3(n,\tau) =& \frac{\tau^3 e^{-n}}{768 \sqrt{n}} \bigg(420 \sqrt{\pi } e^n n^2 \text{erfc}\left(\sqrt{n}\right)-180 n \sqrt{\pi } e^n \text{erfc}\left(\sqrt{n}\right)-45 \sqrt{\pi } e^n \text{erfc}\left(\sqrt{n}\right)-420 n^{3/2}+576 n^{5/2}\\&-128 n^{7/2}-90 \sqrt{n}\bigg).
\end{align*}
\begin{figure}[htb!]
\centering
\subfigure[EPDTM and FVM number density ]{\includegraphics[width=0.35\textwidth]{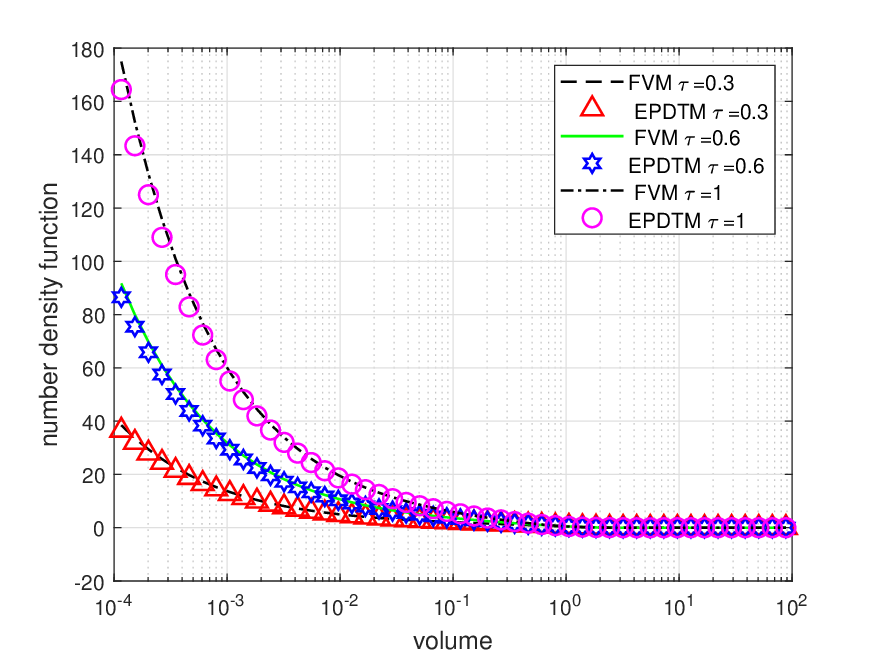}}
\subfigure[FVM and EPDTM zeroth moments]{\includegraphics[width=0.35\textwidth]{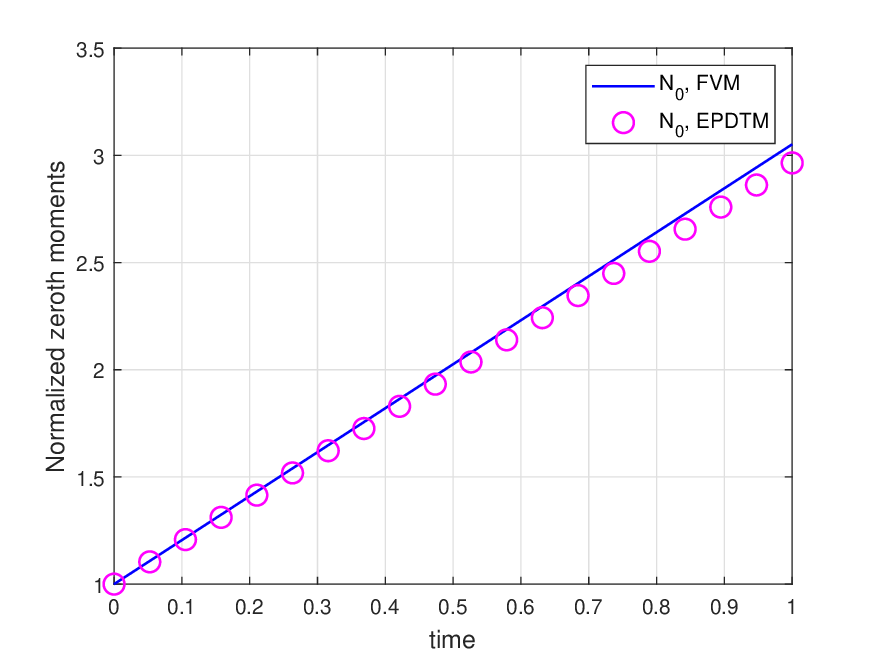}}
\subfigure[FVM and EPDTM first moments]{\includegraphics[width=0.35\textwidth]{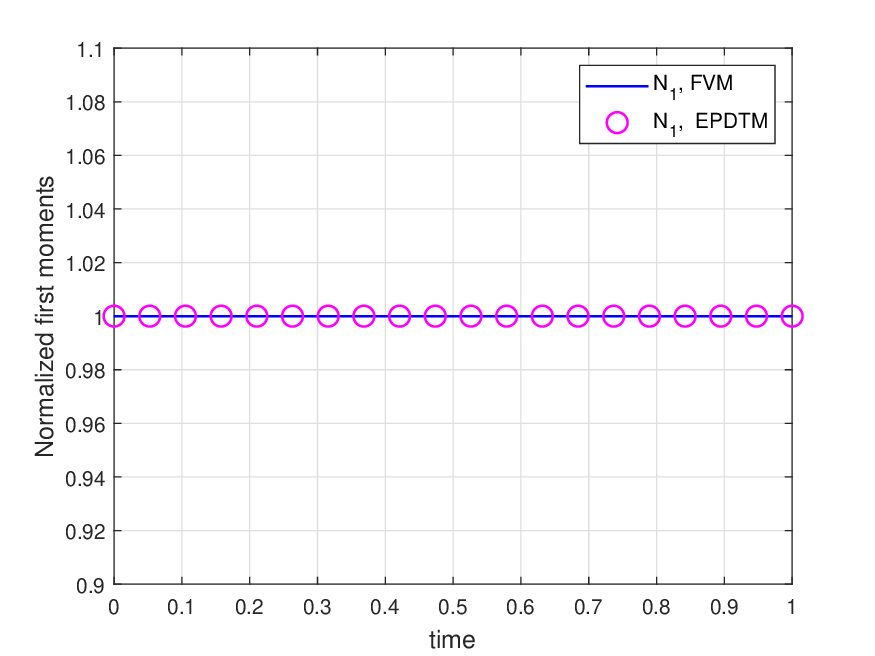}}
\subfigure[ FVM and EPDTM second moments]{\includegraphics[width=0.35\textwidth]{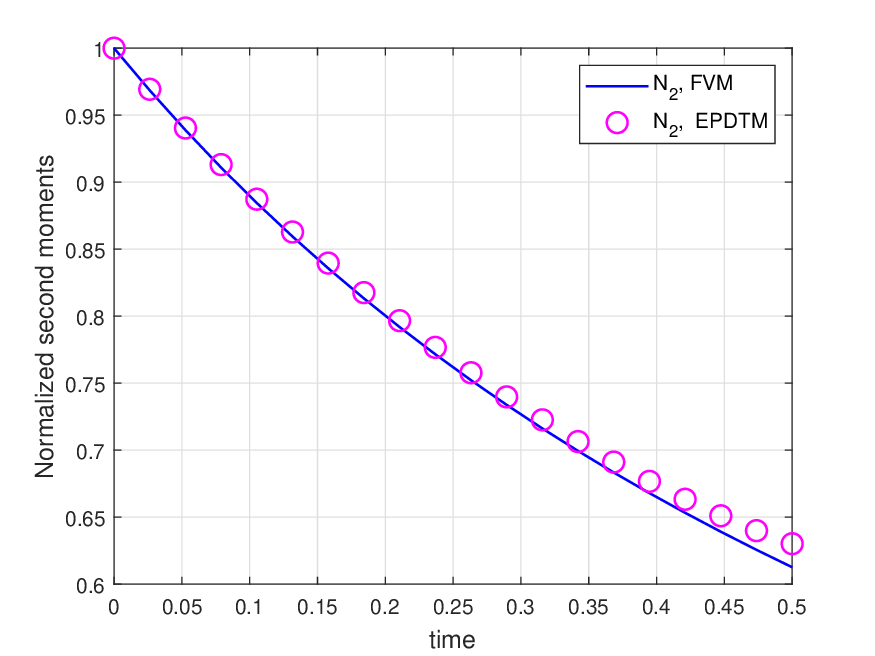}}
\caption{Particle density distribution and moments functions for Test case 4}
\label{fig4}
\end{figure}
By proceeding this way, one can derive higher order coefficients of the iterative series solution.
As mentioned earlier, the EPDTM number density functions align well with the FVM solutions, as illustrated in Figure \ref{fig4}(a). Furthermore, Figures \ref{fig4}(b) and \ref{fig4}(c) provide a qualitative comparison of the integral properties, i.e., zeroth and first-order moments, respectively. These figures demonstrate that the EPDTM computed moments closely resemble the FVM obtained moments. Once again, the proposed approach adheres to the mass conservation law and remains invariant for this complex kernel, as depicted in Figure \ref{fig4}(c). Further, it is visualize from Figure \ref{fig4}(d) that the EPDTM accurately captures the second moment for shorter time periods. As time progresses, discrepancies between the EPDTM and FVM second moments become apparent.\\

\textbf{Test case 5}
In this case, the proposed scheme is validated by using product coagulation kernel $\mu(n,\epsilon)= n\epsilon$ and ternary breakage kernel $\alpha(n,\epsilon,\rho)= \frac{3}{2 \epsilon^{\frac{1}{2}} n^{\frac{1}{2}}} $ with Gaussian initial condition $w(n,0) = \frac{e^{\frac{-n^2}{2}}}{\sqrt{2 \pi}}$.\\
Using the iterations defined in equation \eqref{iter}, we obtain
\begin{align*}
	w_1(n,\tau) =&\frac{\tau e^{-\frac{n^2}{2}} n} {8 \pi } \left(3 e^{\frac{n^2}{2}} E_{\frac{1}{4}}\left(\frac{n^2}{2}\right)-4\right),\hs 
	w_2(n,\tau) =\frac{\tau^2 e^{-\frac{n^2}{2}} n^2} {32 \sqrt{2} \pi ^{3/2}} \left(-15 e^{\frac{n^2}{2}} E_{\frac{1}{4}}\left(\frac{n^2}{2}\right)+3 e^{\frac{n^2}{2}} E_{-\frac{1}{4}}\left(\frac{n^2}{2}\right)+8\right),\\
	w_3(n,\tau) =&\frac{\tau^3 e^{-\frac{n^2}{2}} n^3} {768 \pi ^2} \left(70 e^{\frac{n^2}{2}} n^2 E_{-\frac{3}{4}}\left(\frac{n^2}{2}\right)-30 e^{\frac{n^2}{2}} E_{-\frac{1}{4}}\left(\frac{n^2}{2}\right)-3 e^{\frac{n^2}{2}} E_{-\frac{3}{4}}\left(\frac{n^2}{2}\right)-172\right).
\end{align*}
Here $E_{a}(b)$ refers to the exponential integral function.
\begin{figure}[htb!]
\centering
\subfigure[EPDTM and FVM number density ]{\includegraphics[width=0.35\textwidth]{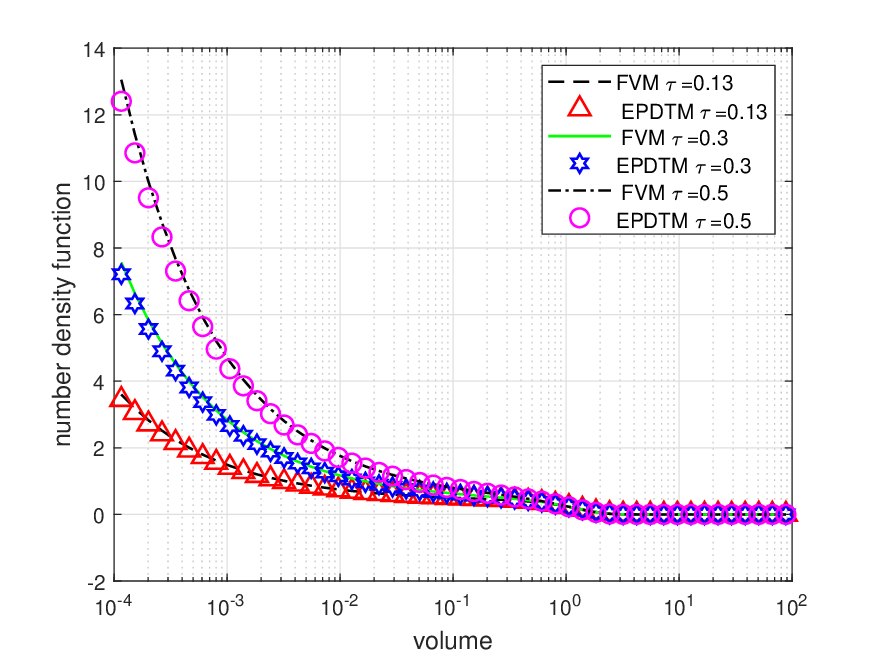}}
\subfigure[FVM and EPDTM zeroth moments]{\includegraphics[width=0.35\textwidth]{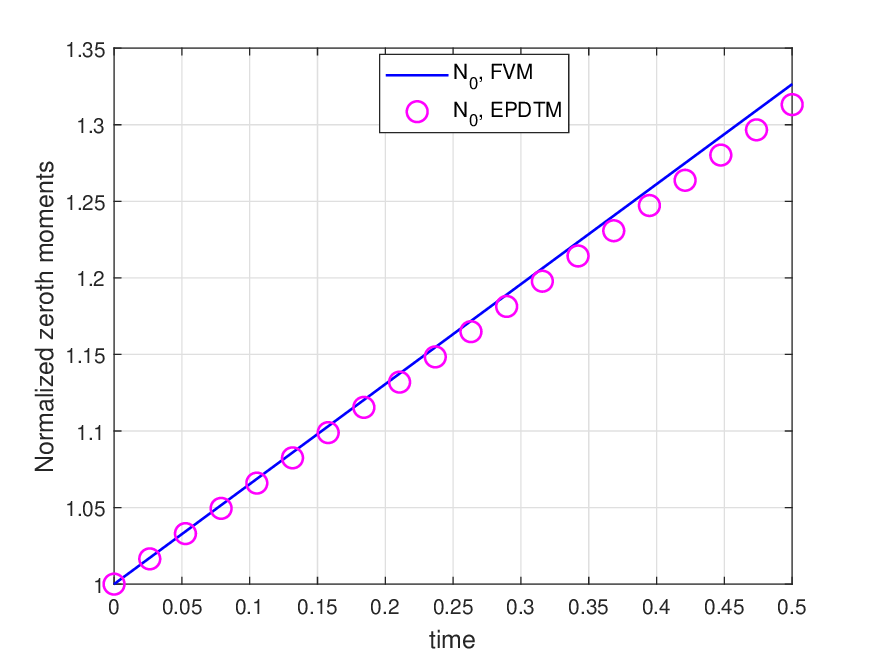}}
\subfigure[FVM and EPDTM first moments]{\includegraphics[width=0.35\textwidth]{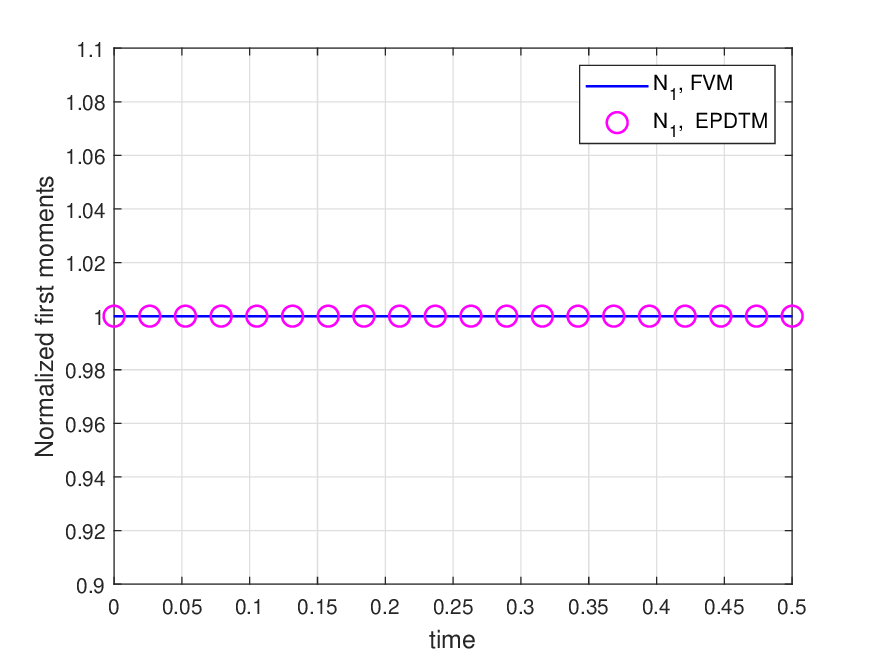}}
\subfigure[FVM and EPDTM second moments]{\includegraphics[width=0.35\textwidth]{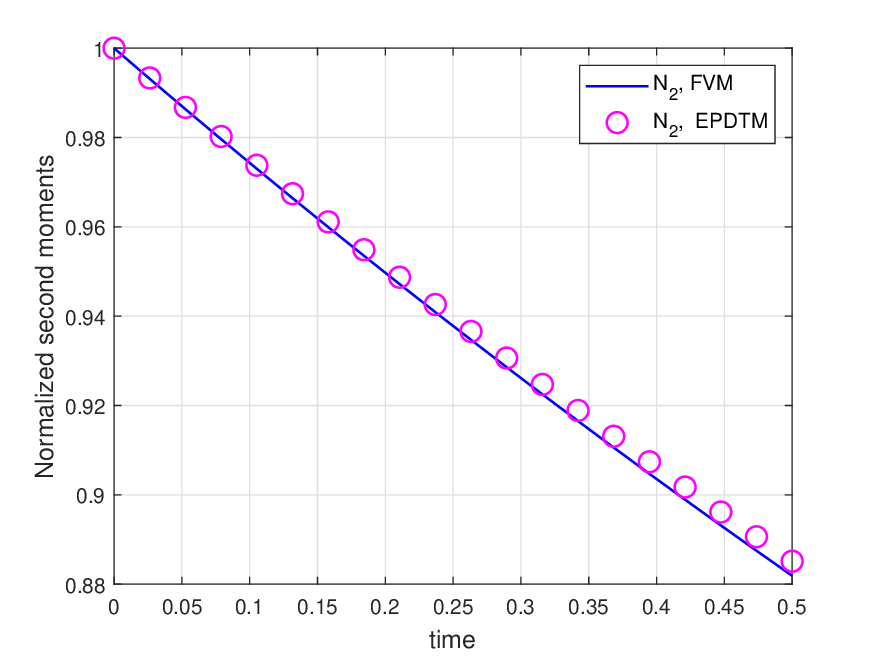}}
\caption{Particle density distribution and moments functions for Test case 5}
	\label{fig5}
\end{figure}
Owing to the intricate nature of the terms, a 5th order iterative series solution of the number density functions and their integral moments are compared with the FVM. Figure \ref{fig5}(a) demonstrates that  both the results obtained via FVM and EPDTM are overlapping to each other at various time levels. Furthermore, as shown in Figure \ref{fig5}(b, c, d), the integral moments determined using the EPDTM accurately produce the efficacy of the proposed scheme as they align well with the FVM results.\\ 

\textbf{Test case 6}
Now consider, product coagulation kernel $\mu(n,\epsilon)= \frac{n\epsilon}{20}$ and ternary breakage kernel $\alpha(n,\epsilon,\rho)= \frac{3}{2 \epsilon^{\frac{1}{2}} n^{\frac{1}{2}}}$ with exponential initial condition $w(n,0) =n^{2} e^{-n}$ .\\
Having the equation \eqref{iter}, we get the following iterations,
\begin{align*}
	w_1(n,\tau) =&\frac{3 \tau e^{-n}} {160 \sqrt{n}}\left(45 \sqrt{\pi } e^n \text{erfc}\left(\sqrt{n}\right)+60 n^{3/2}+24 n^{5/2}-16 n^{7/2}+90 \sqrt{n}\right),\\ 
	w_2(n,\tau) =&-\frac{9 \tau^2 e^{-n}}{12800 \sqrt{n}} \left(\sqrt{\pi } e^n  \text{erfc}\left(\sqrt{n}\right)(450 n-315)+480 n^{3/2}+432 n^{5/2}+192 n^{7/2}-64 n^{9/2}-630 \sqrt{n}\right),\\
	w_3(n,\tau) =&\frac{9 \tau^3 e^{-n}}{1024000 \sqrt{n}} \bigg(6300 \sqrt{\pi } e^n n^2 \text{erfc}\left(\sqrt{n}\right)-6300 \sqrt{\pi } e^n n \text{erfc}\left(\sqrt{n}\right)-2835 \sqrt{\pi } e^n \text{erfc}\left(\sqrt{n}\right)-16380 n^{3/2}\\&+2688 n^{5/2}+4608 n^{7/2}+2304 n^{9/2}-512 n^{11/2}-5670 \sqrt{n}\bigg).
\end{align*}
\begin{figure}[htb!]
\centering
\subfigure[EPDTM and FVM number density ]{\includegraphics[width=0.35\textwidth]{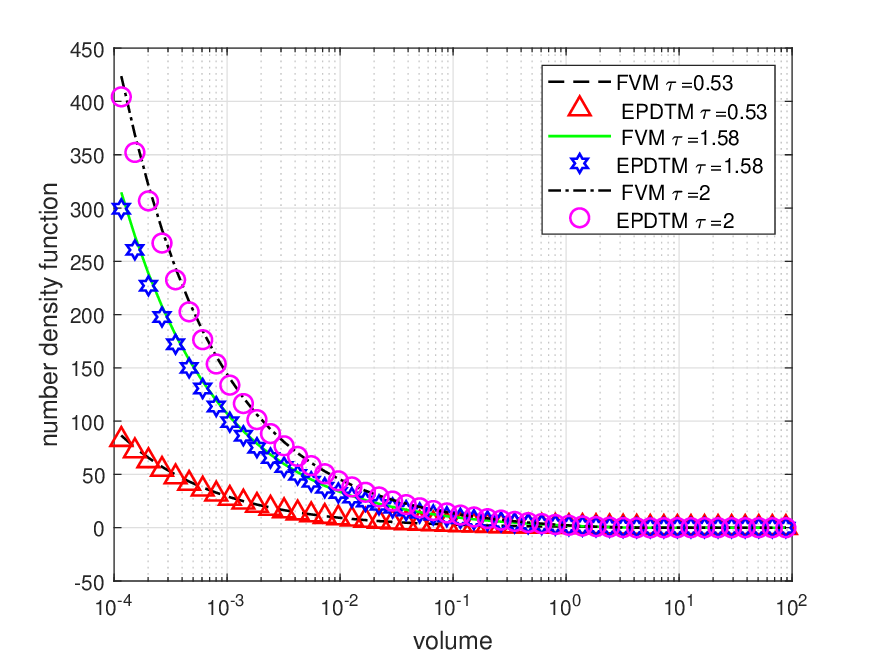}}
\subfigure[FVM and EPDTM zeroth moments ]{\includegraphics[width=0.35\textwidth]{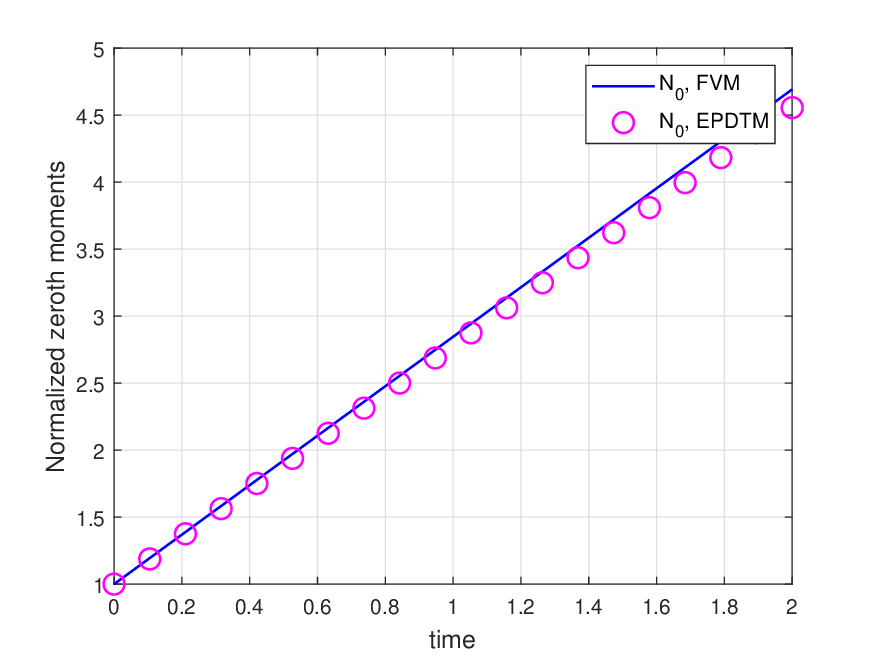}}
\subfigure[FVM and EPDTM first moments ]{\includegraphics[width=0.35\textwidth]{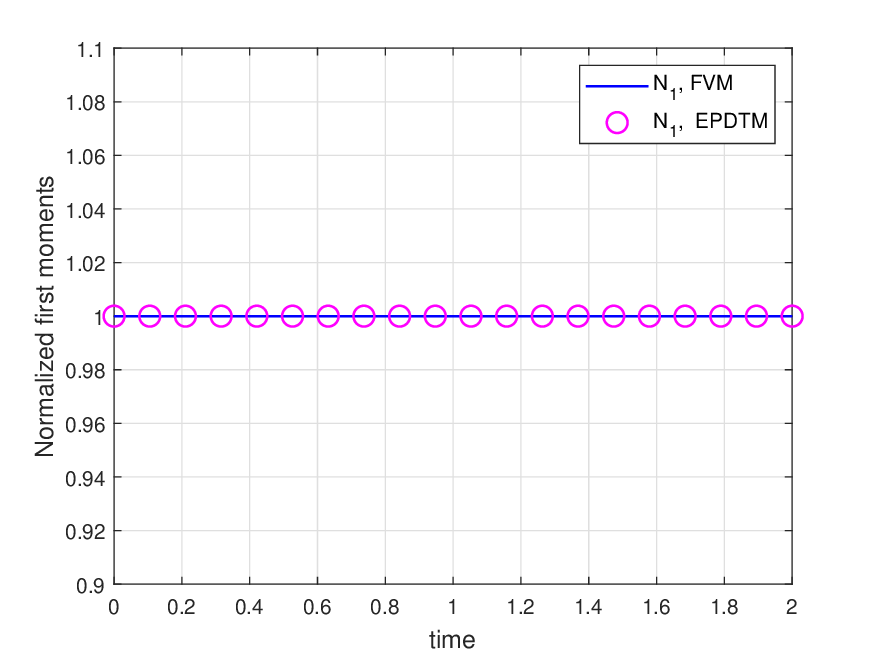}}
\subfigure[FVM and EPDTM second moments ]{\includegraphics[width=0.35\textwidth]{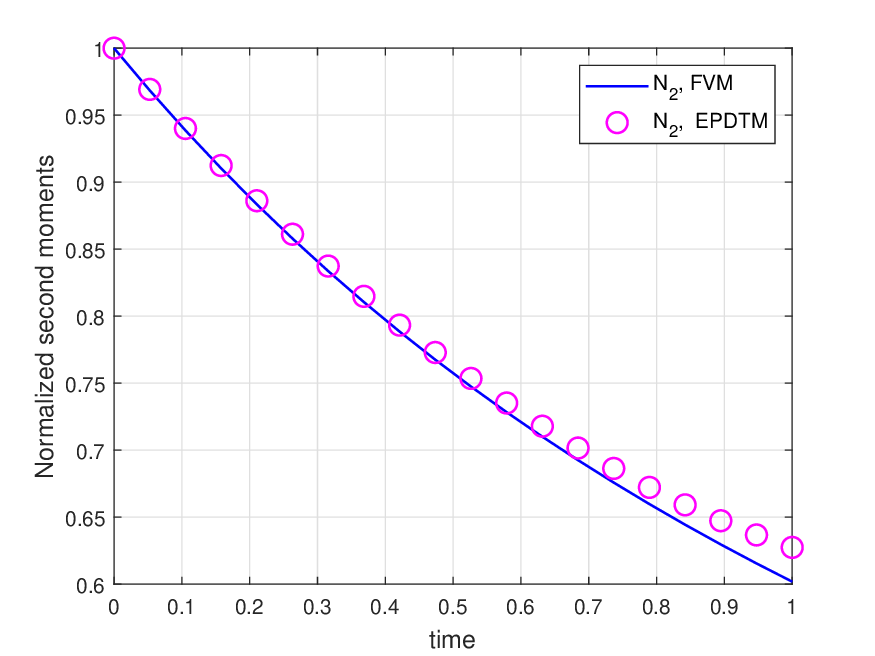}}
\caption{Particle density distribution and moments functions for Test case 6}
\label{fig6}
\end{figure}
Figure \ref{fig6}(a) compares the FVM and EPDTM solutions with 4th-order approximation at various time levels. Interestingly, for all times, the FVM and EPDTM solutions align well with each other. The zeroth and first moments computed using EPDTM also match closely with the moments obtained from the FVM solution, as seen in Figures \ref{fig6}(b) and \ref{fig6}(c). Similar results have been noticed for the second moment but only for shorter time as shown in Figure \ref{fig6}(d). As time progresses, EPDTM starts deviating.

\section{Conclusion and future aims}
This article represented a significant advancement in the field of semi-analytical methods by demonstrating the effectiveness of the EPDTM in solving the collisional breakage population balance equation, even in cases where exact solutions had previously been unattainable. Through a meticulous examination of six diverse test cases, which included variations in initial conditions, collisional kernels, and breakage functions, the study highlighted the precision and superiority of the EPDTM in estimating number density and moments with minimal iterations. Notably, in the first example, the series solution converges exactly to the precise solution found in the existing literature, reinforcing the credibility of the technique. In other examples, where precise solutions were unavailable, the algorithm consistently mirrored the model's predicted behavior with remarkable accuracy, providing solutions in functional form. A comprehensive numerical comparison with the FVM and graphical analysis further affirmed the reliability and robustness of the proposed method. An essential contribution of this work was its thorough analysis of the convergence of iterative series solutions towards the analytical solution and the establishment of error bounds.\\

In the future, to extend the solutions for longer time periods and larger domains, the series of solutions obtained by the EPDTM will be further expanded using powerful Pade' approximants. 

%
%
\bibliography{ref}
\bibliographystyle{ieeetr}
\end{document}